\newcommand{\set}[1]{\left\{#1\right\}}
\newtheorem{theorem}{Theorem}[section]
\newtheorem{lemma}[theorem]{Lemma}
\newtheorem*{observation*}{Observation}
\newcommand{\red}[1]{\textcolor{red}{#1}}
\newcommand{\blue}[1]{\textcolor{blue}{#1}}
\pgfplotsset{compat=newest}
\pgfplotsset{overwrite option/.style args={#1 with #2}{#1=#2,#1/.code=}}
\begin{document}

%don't want date printed
\date{}

%make title bold and 14 pt font (Latex default is non-bold, 16 pt)
%\title{PRECISION: Identifying Top Flows on Programmable Switches}
\title{ Efficient Measurement on Programmable Switches Using Probabilistic Recirculation } 
%\author{Anonymous submission to ICNP'18}

%for single author (just remove % characters)
\author{
\IEEEauthorblockN{Ran Ben Basat} 
\IEEEauthorblockA{Technion  \\ {\small sran@cs.technion.ac.il}
}
\and
\IEEEauthorblockN{Xiaoqi Chen}
\IEEEauthorblockA{Princeton University \\ {\small xiaoqic@cs.princeton.edu}}
\and
\IEEEauthorblockN{Gil Einziger} 
\IEEEauthorblockA{Nokia Bell Labs \\ {\small gil.einziger@nokia.com}}
\and
\IEEEauthorblockN{Ori Rottenstreich}
\IEEEauthorblockA{Technion \\ {\small ori.rot@gmail.com}}
}

\maketitle

\ifdefined\withoutComments
\newcommand{\ran}[1]{}
\newcommand{\danny}[1]{}
\newcommand{\gil}[1]{}
\newcommand{\ori}[1]{} 
\else
\newcommand{\ran}[1]{\blue{Ran: #1}}
\newcommand{\danny}[1]{\red{Danny: #1}}
\newcommand{\gil}[1]{\blue{ Gil: #1}}
\newcommand{\ori}[1]{\red{ Ori: #1}} 
\fi

%\renewcommand{\changed}[1]{#1} 

% Use the following at camera-ready time to suppress page numbers.
% Comment it out when you first submit the paper for review.
%\thispagestyle{empty}

\begin{abstract}
Programmable network switches promise flexibility and high throughput, enabling applications such as load balancing and traffic engineering.
Network measurement is a fundamental building block for such applications, including tasks such as the identification of heavy hitters (largest flows) or the detection of traffic changes.

However, high-throughput packet processing architectures place certain limitations on the programming model, such as
restricted branching, limited capability for memory access, and a limited number of processing stages.
These limitations restrict the types of measurement algorithms that can run on programmable switches.
In this paper, we focus on the RMT programmable high-throughput switch architecture, and carefully examine its constraints on designing measurement algorithms. We demonstrate our findings while solving the heavy hitter problem.

We introduce PRECISION, an algorithm that uses \emph{Probabilistic Recirculation} to find top flows on a programmable switch. 
By recirculating a small fraction of packets, PRECISION simplifies the access to stateful memory to conform with RMT limitations and achieves higher accuracy than previous heavy hitter detection algorithms that avoid recirculation. 
We also analyze the effect of each architectural constraint on the measurement accuracy and provide insights for measurement algorithm designers.
%that are applicable for the implementation of various algorithms.
\end{abstract}

\newcommand{\matrixCellWidth}{5.8cm}

%\begin{multicols}{2}
\section{Introduction}
\label{sec:intro}
Programmable network switches enable rapid deployment of network algorithms such as traffic engineering, load balancing, quality-of-service optimization, anomaly detection, and intrusion detection~\cite{ApproximateFairness,IntrusionDetection,TrafficEngeneering,IntrusionDetection2,LoadBalancing}.

Measurement capabilities are often at the core of such applications, as they extract information from traffic to make informed decisions~\cite{BetterNetflow}. 
Typically there are millions of network flows to monitor~\cite{CounterArray1,CounterArray2} and ideally each flow is allocated some memory for storing its measurement statistics. Coping with the 100~Gbps line rate requires the measurement to be stored in fast but limited-capacity SRAM memory. However, SRAM is too small for keeping exact flow statistics for every flow. 
\emph{Heavy Hitter} algorithms only store flow state for the largest flows to overcome this limitation. 
This approach exposes a trade-off between memory space and accuracy, where additional space improves the accuracy. 

%Heavy hitter detection algorithms come with two distinct flavors, 
There are two types of solutions for heavy hitter detection problem  ---
\emph{counter-based} algorithms and \emph{sketch-based} algorithms.
Counter-based algorithms maintain a bounded-size flow cache. Only a small portion of the flows are measured, and each monitored flow has its own counter~\cite{LC,SpaceSavingIsTheBest2010}.
Examples of counter-based algorithms include \emph{Lossy Counting}~\cite{LC}, \emph{Frequent}~\cite{BatchDecrement}, \emph{Space-Saving}~\cite{SpaceSavings}, and \emph{RAP}~\cite{RAP}. In sketch-based algorithms, counters are implicitly shared by many flows. Examples of sketch-based algorithms include 
\emph{Multi Stage Filters}~\cite{CUSketch}, \emph{Count-Min Sketch}~\cite{CountMinSketch}, \emph{Count Sketch}~\cite{CountSketch},  
\emph{Randomized Counter Sharing}~\cite{RandomizedCounterSharing}, 
\emph{Counter Tree}~\cite{CounterTree}, and \emph{UnivMon}~\cite{UnivMon}.

%Although Space-Saving (a counter-based algorithm) showed superior accuracy~\cite{LC,SpaceSavingIsTheBest2010} compared with sketch-based algorithms, in reality people often resort to sketch-based algorithm due to its ease to implement on programmable switches.

%The heavy hitter detection problem has two different flavors.
Heavy hitter measurement has two closely related goals.
In the \emph{frequency estimation} problem, we wish to approximate the size of a flow whose ID is given at query time.
Alternatively, in the \emph{top-k} problem, the algorithm is required to list the $k$ top flows.  In general, sketch algorithms solve the frequency estimation problem but require additional efforts to address the top-$k$ problem. 
For example, UnivMon~\cite{UnivMon} uses heaps alongside the sketches to track the top flows.
FlowRadar~\cite{FlowRadar} and Reversible~Sketch~\cite{ReversibleSketch} encode flow ID in the sketch, and have a small probability to fail to decode. 
In contrast, counter algorithms already store flow identifiers and can directly solve the top-$k$ problem.  While sketch algorithms are readily implementable in programmable switches, supporting top-$k$ measurements is a strong motivation for deploying counter algorithms in such switches. Unfortunately, high-performance packet processing imposes severe restrictions on the programming model which makes implementing counter algorithms~a~daunting~task. 

%Specifically, counter algorithms assume an unrestricted access to memory while high-performance network switches use a use a pipelined execution architecture, which imposes strict limitations on computation and memory accessl~\cite{RMT,Domino}. HashPipe and HashParallel~\cite{HashPipe} are a step towards adapting the counter-based Space-Saving algorithm to the pipeline execution model of programmable switch. 

%This work is a step toward making counter algorithms (in general) accessible to programmable high-performance switches. 

%\begin{figure}
%\centering
%\hspace{-0.5cm} \includegraphics[width=0.45\textwidth]{}
%\caption{By recirculating a small percentage of the packets, PRECISION achieves a sweet spot between higher accuracy and low throughput overhead.  While being more accurate, Space-Saving and RAP cannot directly be implemented on programmable switches.}
%\label{fig:illustration}
%\end{figure}

\textbf{Contribution. }
Our work summarizes the restrictions of the Reconfigurable Match Tables (\emph{RMT})~\cite{RMT} switch programming model in the context of measurement algorithm design.
The RMT breakthrough design allows a pipeline multiple match-action tables of different widths and depths and was recently described as a ``key enabling technology for the support of the emerging P4 data plane programming''~\cite{dargahi2017survey}.
We divide the RMT restrictions into four easy-to-understand rules: \emph{limited branching}, \emph{limited concurrent memory access}, \emph{single stage memory access}, and \emph{fixed number of stages}. 
%\begin{figure}[b]
%[!b]%[!ht]
%	\centering
%\begin{tikzpicture}[scale=0.5]
%\tikzstyle{every loop}=[]
%\draw (0,0) rectangle (3,3);
%\filldraw[black] (0.3,-0.4) circle (0pt) node[anchor=west] {Stage $1$};
%
%\draw (4.5,0) rectangle (7.5,3);
%
%\filldraw[black] (4.8,-0.4) circle (0pt) node[anchor=west] {Stage $2$};
% \draw [->,solid,thick, blue, line width=0.75mm] (3,1.5) to  (4.5,1.5);
%
%\node[style={circle,draw,thick,align=center, fill={black}}] (0) at (9.0, 0.5) {};
%\node[style={circle,draw,thick,align=center, fill={black}}] (0) at (9.75, 0.5) {};
%\node[style={circle,draw,thick,align=center, fill={black}}] (0) at (10.5, 0.5) {};
% \draw [->,solid,thick, blue, line width=0.75mm] (7.5,1.5) to  (9,1.5);
% \draw [->,solid,thick, blue, line width=0.75mm] (10.5,1.5) to  (12,1.5);
%\draw (12,0) rectangle (15,3);
%
%\filldraw[black] (12.3,-0.4) circle (0pt) node[anchor=west] {Stage $d$};
% \path[]
% ;
%\end{tikzpicture}
% \caption{The RMT pipeline model}\label{fig_blockchain}
%\end{figure}

We present \emph{Probabilistic RECirculation admisSION ({PRECISION})} -- a heavy hitter algorithm that is fully compatible with the RMT high-performance programmable switch architecture. PRECISION is implemented in the P4 language and can be compiled to the newly released Barefoot Tofino~\cite{Tofino} programmable switch that achieves multiple Tbps of aggregated throughput.
The P4 source code of PRECISION can be found at~\cite{GitHub}.
The core idea behind PRECISION is \emph{probabilistic recirculation}; PRECISION randomly recirculates a small portion of packets from unmonitored flows; when a packet is recirculated, it passes through the programmable switching pipeline twice.
In the first pipeline pass, we try to match a packet to an existing flow entry; if this succeeds, we increment its counter. If unmatched, we probabilistically recirculate it
to claim an entry with the new packet's flow~ID.
Using the packet recirculation feature greatly simplifies the memory access pattern without significantly degrading throughput, while by carefully setting the recirculation probability we can achieve high monitoring accuracy.

Previous suggestions include HashParallel and HashPipe  \cite{HashPipe}, two counter-based heavy hitter detection algorithms proposed specifically for running on high-throughput programmable switches. They both maintain a $d$-stage flow table tailored to the pipeline architecture of programmable switches but differ in whether to recirculate an unmatched packet.
HashPipe never recirculates packets and always inserts the new entry, which yields high throughput but lower accuracy. 
Instead, HashParallel recirculates \textbf{every} unmatched packet, which achieves much better accuracy but lowers the throughput. In contrast, PRECISION only recirculates a tiny portion of the unmatched packets with a minimal impact on performance. This approach allows PRECISION to conform to the RMT memory access restrictions and also improves accuracy over HashPipe, especially for heavy-tailed workloads. We then analyze the impact of each RMT constraint individually and find that most limitations have little effect in practice. 
We also show that HashPipe~\cite{HashPipe} cannot satisfy both the 
\emph{limited branching} rule and the \emph{single stage memory access} rule, and is, therefore, challenging to implement in the RMT model.
This highlights the importance of our study of the model limitations for researchers and practitioners alike. 

%Surprisingly, we discover that restricted access to randomness increases the required memory for a given accuracy by a factor of 2 due to limitations on recirculation probability. 

%On the other hand, this approach yields improved accuracy as we show in our evaluation. 
Finally, we evaluate PRECISION on real packet traces and show that it improves on the state-of-the-art for high-performance programmable switches (HashPipe) for the two variants of the heavy hitter problem. It is up to 1000~times more accurate than HashPipe for the frequency estimation problem and reduces the space required to correctly identify the top-128 flows by a factor of up to 32~times. When compared to general (software) heavy hitter algorithms, PRECISION often has similar accuracy compared to Space-Saving and RAP. Interestingly, approximating the desired recirculation probability appears very important, with a stage-efficient 2-approximate solution PRECISION requires at most four times as much memory as RAP. When we dedicate more hardware pipeline stages to achieve a better approximation, the performance gap between PRECISION and RAP diminished. 

\textbf{Paper Outline. }
The paper is structured as follows.
Section~II outlines the programming restrictions of the RMT high-performance switch architecture and their impact on designing data plane algorithms.
Section~III introduces the reader to the heavy hitter detection problem and surveys related work.
Section~IV discusses the implementation of PRECISION, specifically how we adapt to the limitations imposed by the RMT architecture.
We present theoretical analysis on bounding the amount of recirculation in Section~V.
In Section~VI, we evaluate PRECISION, by first quantifying the impact of each adaptation on the accuracy, and then position it within the field by comparing it with other heavy-hitter detection algorithms. Finally, we conclude in Section~VII. 

%\section{Probabilistic Space Saving}
%\input{body}
%\input{kway}
\section{Constraints Of Programmable Switches}
\label{sec:hardware}
The emergence of P4-based programmable data plane~\cite{P4} is an exciting opportunity to push network algorithms to programmable switches.
In this section, we give a brief introduction of the recently developed RMT~\cite{RMT} high-performance programmable switch architecture and then explain its programming model and its restrictions in the context of network measurement algorithm design. 

The RMT architecture uses a pipeline to process packets. At a glance, the packet first goes into a programmable packet header parser that extracts the header fields, and then traverses a series of pipeline stages, and finally is emitted through a deparser.
Each stage includes a \emph{Match-Action Table}, which first performs a \emph{Match} that reads some packet header fields and matches them with a list of values. Then, it performs the corresponding \emph{Action} on the packet, which can be routing decisions or modifying header field variables.
RMT promises Tbps-level throughput which is achieved by limiting the complexity of pipeline stages. These typically run at a fixed clock cycle $\geq \SI{1}{\GHz}$ (i.e., $\leq 1\mu s$ processing time), permitting only elementary actions.
Flexibility is achieved by allowing many parallel actions in the same stage, and by connecting many simple stages into a pipeline. 

Our case-study of heavy hitter measurement in this model exposed the following restrictions which we survey below. 
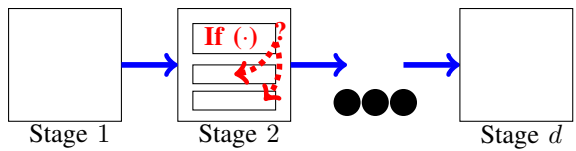
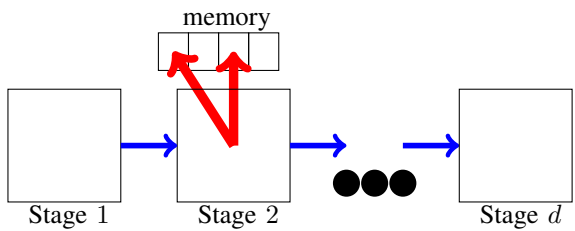
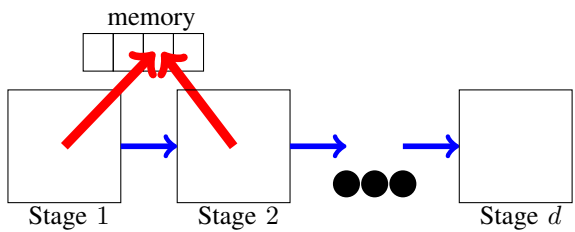
\begin{figure}
%[!b]%[!ht]
	\centering
\subfloat[Restriction I: Limited in-stage branching]{
\label{fig1:res-branching}
\begin{tikzpicture}[scale=0.5]
\tikzstyle{every loop}=[]
\draw (0,0) rectangle (3,3);
\filldraw[black] (0.3,-0.4) circle (0pt) node[anchor=west] {Stage $1$};

\draw (4.5,0) rectangle (7.5,3);

\draw (4.9,1.8) rectangle (7.1,2.6);
\filldraw[black] (4.95,2.2) circle (0pt) node[anchor=west] {\textcolor{red}{\textbf{If ($\cdot$)}}};
\draw (4.9,1.0) rectangle (7.1,1.5);
\draw (4.9,0.3) rectangle (7.1,0.8);

  \draw [->,dotted,thick, red, line width=0.75mm] (7.1,2.1) to [bend left] (6,1.25);
   \draw [->,dotted,thick, red, line width=0.75mm] (7.1,2.1) to [bend left] (7-0.2,0.55);
   \filldraw[black] (6.8,2.4) circle (0pt) node[anchor=west] {\textcolor{red}{\textbf{?}}};
   
\filldraw[black] (4.8,-0.4) circle (0pt) node[anchor=west] {Stage $2$};
 \draw [->,solid,thick, blue, line width=0.75mm] (3,1.5) to  (4.5,1.5);
 
\node[style={circle,draw,thick,align=center, fill={black}}] (0) at (9.0, 0.5) {};
\node[style={circle,draw,thick,align=center, fill={black}}] (0) at (9.75, 0.5) {};
\node[style={circle,draw,thick,align=center, fill={black}}] (0) at (10.5, 0.5) {};
 \draw [->,solid,thick, blue, line width=0.75mm] (7.5,1.5) to  (9,1.5);
 \draw [->,solid,thick, blue, line width=0.75mm] (10.5,1.5) to  (12,1.5);
\draw (12,0) rectangle (15,3);

\filldraw[black] (12.3,-0.4) circle (0pt) node[anchor=west] {Stage $d$};
 \path[]
 ;
\end{tikzpicture}
}

	\subfloat[Restriction II: Limited memory access within a stage]{
\label{fig1:res-way}
\begin{tikzpicture}[scale=0.5]
\tikzstyle{every loop}=[]
\draw (0,0) rectangle (3,3);
\filldraw[black] (0.3,-0.4) circle (0pt) node[anchor=west] {Stage $1$};

\draw (4,3.5) rectangle (4.8,4.5);
\draw (4.8,3.5) rectangle (5.6,4.5);
\draw (5.6,3.5) rectangle (6.4,4.5);
\draw (6.4,3.5) rectangle (7.2,4.5);
\filldraw[black] (4.4,4.8) circle (0pt) node[anchor=west] {memory};
  \draw [->,solid,thick, red, line width=1.25mm] (6,1.5) to  (4.4,4);
   \draw [->,solid,thick, red, line width=1.25mm] (6,1.5) to  (6,4);
\draw (4.5,0) rectangle (7.5,3);

\filldraw[black] (4.8,-0.4) circle (0pt) node[anchor=west] {Stage $2$};
 \draw [->,solid,thick, blue, line width=0.75mm] (3,1.5) to  (4.5,1.5);
   
\node[style={circle,draw,thick,align=center, fill={black}}] (0) at (9.0, 0.5) {};
\node[style={circle,draw,thick,align=center, fill={black}}] (0) at (9.75, 0.5) {};
\node[style={circle,draw,thick,align=center, fill={black}}] (0) at (10.5, 0.5) {};
 \draw [->,solid,thick, blue, line width=0.75mm] (7.5,1.5) to  (9,1.5);
 \draw [->,solid,thick, blue, line width=0.75mm] (10.5,1.5) to  (12,1.5);
\draw (12,0) rectangle (15,3);

\filldraw[black] (12.3,-0.4) circle (0pt) node[anchor=west] {Stage $d$};
 \path[]
 ;
\end{tikzpicture}
}

	%\subfloat[Restriction III: Access to any memory must be from a single~stage]{
	\subfloat[Restriction III: One memory address cannot be accessed from multiple stages]{
\label{fig1:res-hazard}
\begin{tikzpicture}[scale=0.5]
\tikzstyle{every loop}=[]
\draw (0,0) rectangle (3,3);
\filldraw[black] (0.3,-0.4) circle (0pt) node[anchor=west] {Stage $1$};

\draw (2,3.5) rectangle (2.8,4.5);
\draw (2.8,3.5) rectangle (3.6,4.5);
\draw (3.6,3.5) rectangle (4.4,4.5);
\draw (4.4,3.5) rectangle (5.2,4.5);
\filldraw[black] (2.4,4.8) circle (0pt) node[anchor=west] {memory};
  \draw [->,solid,thick, red, line width=1.25mm] (1.5,1.5) to  (3.9,4);
   \draw [->,solid,thick, red, line width=1.25mm] (6,1.5) to  (4.1,4);
\draw (4.5,0) rectangle (7.5,3);

\filldraw[black] (4.8,-0.4) circle (0pt) node[anchor=west] {Stage $2$};
 \draw [->,solid,thick, blue, line width=0.75mm] (3,1.5) to  (4.5,1.5);
   
\node[style={circle,draw,thick,align=center, fill={black}}] (0) at (9.0, 0.5) {};
\node[style={circle,draw,thick,align=center, fill={black}}] (0) at (9.75, 0.5) {};
\node[style={circle,draw,thick,align=center, fill={black}}] (0) at (10.5, 0.5) {};
 \draw [->,solid,thick, blue, line width=0.75mm] (7.5,1.5) to  (9,1.5);
 \draw [->,solid,thick, blue, line width=0.75mm] (10.5,1.5) to  (12,1.5);
\draw (12,0) rectangle (15,3);

\filldraw[black] (12.3,-0.4) circle (0pt) node[anchor=west] {Stage $d$};
 \path[]
 ;
\end{tikzpicture}
}
 \caption{Illustration of some restrictions imposed by RMT pipeline model for designing measurement algorithm.
	}\label{fig_blockchain}
	\vspace*{-5mm}
\end{figure}

\textbf{Simple Per-stage Actions (Limited Branching).} 
Each pipeline stage can only execute primitive arithmetic. For example, division is much slower than addition; thus the switching hardware usually does not support division. Also, branching operations are expensive, and the hardware pipeline may only support very limited branching within stages (but can perform complex branching across stages), as illustrated in Figure~\ref{fig1:res-branching}.
Therefore, we cannot perform arbitrary computation and have to redesign the algorithm to fit the architecture.
%In particular, the Tofino switch only supports \textbf{one} serial arithmetic operation in one stage (parallel operations are allowed), and cannot perform branching within a stage except for during a read-condition-write stateful memory access. 

\textbf{Limited Concurrent Memory Access. } 
A small amount of static random access memory (SRAM) is attached to each hardware stage for stateful processing. As illustrated in Figure~\ref{fig1:res-way}, when a packet arrives, it can access one, or a few, addresses in the memory region but not read or write the entire memory block, again due to per-stage timing requirement.
From an algorithm design perspective, this means we can only read from or write to memory at specific addresses, and therefore cannot compute even the most straightforward functions globally, e.g., find a minimum across many~array~elements.

\textbf{Single Stage Memory Access. } 
Each stage is processing a different packet at any given time. Therefore, allowing two packets to access the same memory region may cause a read-write hazard, shown in Figure~\ref{fig1:res-hazard}. The RMT architecture avoids this by allowing access to stateful memory blocks only from one particular pipeline stage.  Thus, our algorithm can only access each memory region once as the packet is going through the pipeline. We need to \emph{recirculate} a packet, causing it to go through the entire pipeline again, in order to access the same memory block twice. Recirculation is expensive as it reduces the rate that incoming packets can access the pipeline.  

Even in more recently proposed architecture like dRMT~\cite{dRMT} where memory resources are dynamically allocated to different hardware stages, we still cannot allow accessing the same memory region from two different pipeline stages.  Therefore, the restriction we describe seems fundamental. 

\textbf{Fixed Number of Stages. } For guaranteeing a low per-packet latency, the switch cannot have too many pipeline stages. In our case, since the pipeline is not very long, the total number of operations performed on a packet cannot exceed a hardware-imposed constant. Again, we can circumvent the limit by recirculating some packets, with a throughput impact.

%There are other nuances of the Barefoot Tofino programmable switching hardware, such as restricted random number generation that limits the probabilities we can use when recirculating packets. We therefore suggest a method to approximate the desired probability within the RMT model which may be of use to other randomized algorithms. 

\textbf{Discussion.}
While these restrictions target specifically the newly proposed RMT architecture, we believe that future high-throughput switching architectures are likely to have similar constraints due to the throughput and latency requirements they need to satisfy.  

We also note that capabilities prepared for packet forwarding can be exploited by measurement algorithms as well.
The Match-Action Table model specifies that each pipeline stage will use a part of packet header data (e.g., a network address) to perform a lookup in a match table, and subsequently executes the corresponding action in the table (e.g., a forwarding destination).
In our algorithm design perspective, this means we can perform parallel lookups on intermediate variable cheaply. Beyond \emph{exact} matching, the architecture also supports \emph{ternary} and~\emph{longest-prefix}~matching.

Note that the TCAM memory used in table lookup is different from the memory used for stateful processing (SRAM) mentioned earlier. 
TCAM allows for parallel reads, but writing may not finish in constant time. Hence it can only be modified by the switch control plane but not within the data plane (by the packet being processed, in one pipeline clock cycle). Thus, the parallel-readable lookup tables are ``read-only'' for the packet, and writable memory must be accessed by addresses.

%\section{Background and Related Work}
\section{Problem Definition and Existing Solutions} 
\label{sec:related}
This section formally defines the problems addressed in this work as well as surveys the relevant related work. 

\subsection{Problem statement}
%\subsection{Problem statement}
Our work targets two common measurement forms, the \emph{frequency estimation} problem and the \emph{top-$k$} problem. For both, we refer to a quasi-infinite packet stream, where each packet is associated with a flow as explained~below. 

A flow refers to a particular subset of the packet stream that we choose to combine and analyze as a whole. For example, a flow may apply to a TCP or a UDP connection, in which case the connection five-tuple (source and destination IP, protocol, source and destination port) becomes the flow identifier. Alternatively, a flow may refer to just the source IP address, or just the destination IP and port pair. In any case, we assume that a flow identifier is available from some fields of the packet header, and that flows partition the stream such that each packet belongs to a single flow.

We denote the frequency of a network flow with ID $s$, or the total number of packets belonging to flow $s$, as $f_s$. For the \emph{frequency estimation} problem, we use the OnArrival %model~\cite{RAP,HeavyHitters,dimsum}, 
model~\cite{RAP}, 
which requires an algorithm to estimate the flow frequency for each new packet it sees, and evaluates the estimation error upon each packet arrival. 
Formally, we reveal packets in a stream $(p_1,p_2,\ldots)$ one packet at a time, and on each packet arrival, with packet $p_t$ belonging to some flow $s$. An algorithm \emph{Alg} is required to provide an estimate $\widehat{f_{s}}$ for 
$f_{s}\triangleq|\left\{  
p_i \in s|1\leq i \leq t
\right\}|$ ---
the number of packets belonging to flow $s$ in $p_1, \dots, p_t$.
We then measure the \emph{Mean Square Error (MSE)} of the algorithm, i.e.,
$$MSE(Alg) \triangleq \frac{1}{N}\sum_{t=1}^{N}(\widehat{f_{s}} - f_{s})^2.$$

The \emph{top-$k$} identification problem is defined as follows: Given a stream $(p_1,p_2,\ldots)$ and a query parameter $k$, the algorithm outputs a set of flows containing as many of the $k$ largest flows as possible.
We denote the $k^{th}$ largest flow's frequency by $F_k$.
%Given a flow set, 
When the algorithm outputs a flow set $\mathbf{C}$,
we judge its quality using the standard \emph{Recall} metric that measures how many top flows it identifies:
\begin{align*}
\text{Recall}(\mathbf{C})    &\triangleq\ \  |e\in \mathbf{C} : f_e \ge F_k| / k.
\end{align*}
%Intuitively, recall measures how many of the top $k$ are represented in $\mathbf{C}$. 

\subsection{Existing Approaches}
\textbf{The Space-Saving algorithm:}
Space-Saving (SS)~\cite{SpaceSavings} is a heavy hitter algorithm designed for database applications and software implementations.
Space-Saving maintains a fixed-size flow table, where each entry has a flow identifier and a counter. 
When a packet from an unmonitored flow arrives, the identifier of the minimal table entry is replaced with the new flow's identifier, and its counter is incremented. Space-Saving uses a sophisticated data structure named stream-summary which allows it to maintain the entries ordered according to counter values in constant time as long as all updates are~of~unit~weight. 

%Space-Saving was designed with database workloads in mind which often exhibit a skewed access pattern. 
Space-Saving was designed for database workloads, which often exhibit a heavily concentrated access pattern, i.e. most of the traffic comes from a few heavy hitters.
In contrast, networking traces are often heavy-tailed~\cite{FilteredSpaceSaving,RAP}. That is, a non-negligible percentage of the packets belong to tail flows or those other than heavy hitters. Unfortunately, Space-Saving works poorly on such workloads. 

\textbf{Optimization for heavy-tailed workloads:}
To deal with heavy-tailed workload,
Filtered Space-Saving~\cite{FilteredSpaceSaving} 
attempts to filter out tail flows before inserting into flow table.
It utilizes a bitmap alongside a Space-Saving instance.
When a packet arrives, a hash function is used to map its flow ID into a bitmap entry. If the entry is zero, it merely sets the entry to one. Otherwise, we update the Space-Saving~instance.

%Arriving flows are first inserted into the Bloom filter;
%and are only counted by the Space-Saving instance from their second packet onward
%(if Bloom Filter reports a previous packet of this flow has already been inserted), or in cases of false positives. The authors show that in heavy-tailed workloads the majority of packets would not be added to the Space-Saving instance, which improves accuracy.

Maintaining additional data structures to filter tail flows may be wasteful. Therefore, \emph{Randomized Admission Policy (RAP)}~\cite{RAP} suggests using randomization instead.
When an unmonitored flow arrives, it is admitted only with a small probability. Thus, most tail flows are filtered while heavy hitters that appear many times are eventually admitted. 
%By adding unmonitored flow with a  andmall probability, tail flows are unlikely to be added while heavy hitters appearing many times would eventually be admitted. 
%
Specifically, if the minimal entry has a counter value of $c$, RAP requires the competing flow to win a coin toss with a probability of $\frac{1}{c+1}$ to be added. 
The idea of RAP can be applied to the Space-Saving algorithm for software implementations. For hardware efficiency, the authors evaluate a limited associativity variant. 

Unfortunately, the programming model of high-performance programmable switches is too restrictive to implement these algorithms directly. 
Specifically, Space-Saving evicts the minimal flow entry across all monitored flows, whereas the architecture of programmable switches does not permit finding (and replacing) the minimum element among all counters. 
Even for the limited associativity variant of RAP, it is still difficult to implement the randomize replacement after finding the approximate minimum value, due to same-stage memory access restriction.

\textbf{High-performance switch algorithms:} 
HashPipe~\cite{HashPipe} adapts Space-Saving to meet the design constraints of the P4 language and PISA programmable switch architecture~\cite{P4}. 
The authors suggest partitioning the counters into $d$ separate stages to fit the programmable switch pipeline. They use $d$ hash functions that dictate which counter can accommodate each flow on each stage.
%Then, they propose two variants. 
%The authors of \cite{HashPipe} also proposed an alternative strawman approach: to find optimal position for insertion
They first propose a strawman solution, \emph{HashParallel}, which makes each packet traverse all $d$ stages while tracking the minimal value among the counters associated with its flow. If the flow is monitored, HashParallel increments its counter. If not,  it recirculates the packet to replace the minimal entry among the $d$. The authors explain that HashParallel potentially recirculates all the packets, which halves the throughput. 
%The authors first suggest a hypothetical algorithm called \emph{HashParallel} whose core idea is to approximately choose the minimum entry by sampling through $d$ randomly chosen flow entries. 
%First, the authors present \emph{HashParallel}
%HashParallel partitions the counter table into $d$ sub-tables across different pipeline stages, and at each stage, a different hash function is used to randomly sample one entry in that sub-table. When a packet first traverses through a pipeline stage, it will only check for one particular entry, and if the entry's flow ID matches the packet's, the entry's counter will be incremented. During the first traversal, the minimum of these sampled entries is also computed; a packet that failed to match at all stages is recirculated and traverse the pipeline again to replace the minimum sampled entry. The authors explain that HashParallel potentially recirculates all the packets, which is impractical and will cut throughput by half. 

Hence, they suggest HashPipe as a practical variant with no recirculation. In HashPipe, each packet's flow entry is always inserted in the first stage. 
They then find a rolling minimum --- the evicted flow proceeds to the next stage where its counter is compared with the flow monitored there. The flow with the larger counter remains, while the smaller flow's entry is propagated further. Eventually, the smaller counter on the $d^{th}$~stage~is~evicted. This allows HashPipe to avoid recirculation but introduces the problem of duplicates --- some flows may occupy multiple counters, and small flows may  still evict other flows.

Flow Radar~\cite{FlowRadar} is another P4 measurement algorithm that follows a different design pattern. The main design difficulty to overcome is the lack of access to a fully associative hash table in programmable switches. While HashPipe and this work implement a fixed associativity table using multiple pipeline stages, FlowRadar potentially stores multiple flows within the same table entry. That is, upon hash collision the new flow identifier is XORed into the existing identifier. FlowRadar works best when the measurement is distributed, where multiple programmable switches can share their state to decode flow entries. Initially, FlowRadar recovers all flow entries that had no collision. Recovered flows are then recursively removed from the data structure, enabling for more flows to be recovered. 

This approach is differentiated from our own as it attempts to perform an accurate measurement and therefore requires space which is proportional to the number of flows. In contrast, our approach provides an approximation of the flow sizes, and the required memory is independent of the number of flows.  Also, FlowRadar requires multiple measurement devices each encoding a different subset of flows whereas our solution can also be implemented on a single device.

\textbf{Sampling:} 
Instead of running algorithms in data plane, one may also sample a fraction of packets and run sophisticated algorithms elsewhere. This approach simplifies the hardware implementation but the problem migrates elsewhere. Namely, to process the samples in real time, we need additional computation and bandwidth overheads. Also, achieving high monitoring accuracy on smaller flows requires~high~sampling~rate.

%Then, the replaced entry propagates through the pipeline to swap out smaller entries, until at the end of the pipeline when the minimum across the second to the final stage will be removed. A pseudo-code of HashPipe is given in Algorithm~\ref{algo:HashPipe}.  HashPipe is based on Space-Saving which in itself performs poorly on heavy-tailed workloads.  In contrast, PRECISION is based on RAP which uses randomization to better handle these workloads. For PRECISION, this also means practicality as only a small portion of the packets are recirculated. 

%In contrast, PRECISION uses randomization to better handle heavy-tailed workloads. 
%also limits the recirculation probability. 
%Therefore, in PRECISION, we can recirculate packets with a small probability, or even explicitly set the upper bound of recirculation probability.
%This also results in an algorithm which reduces the number of counters needed to find the top flows by a factor of up to 16. PRECISION estimates the per-flow frequency with an up to 1000x smaller mean square error. 

\section{Design and Implementation of PRECISION}
\label{sec:impl}
We now present several hardware-friendly adaptations that address the restrictions imposed by RMT switch architecture. 
%Specifically, we not only want to fit within the limited number of pipeline stages (which roughly translate to the number of sequential instructions allowed) but also the limited capability for stateful register memory access.

\subsection{From fully associative to $d$-way associative memory access}
Building on top of Space-Saving~\cite{SpaceSavings} and RAP~\cite{RAP}, we first tackle the fact that a programmable switch cannot perform the fully-associative memory access to evict the minimum item.
At any given pipeline stage, the algorithm can specify an index to access some location in the register array. The switch may allow accessing a small number of positions simultaneously but definitely cannot compute a global minimum across an entire register array.

We adopt the limited-associativity idea from HashParallel and HashPipe~\cite{HashPipe} to approximately evict a small element, by choosing the minimum across $d$ randomly selected elements from $d$ separate register arrays. With this relaxation, we can naturally spread the memory access across different hardware stages, and at each hardware stage, we only access one memory location. Specifically, we use $d$ independent hash functions $h_1,\dots,h_d$ to compute a different index for each stage, and at each stage, we access the $h_i(key)^{th}$ element of the $i^{th}$ register array. Note that PRECISION performs $d$ flow entry reads, but it does not consume exactly $d$ hardware pipeline stages, as processing each read involves two branchings, and costs three hardware stages. We also discuss how to reduce the total number of hardware stages required in Section~\ref{sec:parallel}.

\subsection{Simplified memory access} 
\paragraph{Why HashPipe violates RMT?}
Although the design of HashPipe has already satisfied many restrictions imposed by RMT structure, its memory access pattern prevents us from implementing it in today's programmable switch hardware.
The high-level idea of the HashPipe algorithm (see pseudocode in Algorithm~\ref{algo:HashPipe}) is to always evict the minimum out of $d$ elements, by ``carrying" a candidate eviction element through the pipeline. 
At each stage, we compare the counter read from register memory with that of the carried element. Then, the smaller of which is propagated further onward.

\begin{algorithm}[t]\small%\footnotesize
%\dontprintsemicolon % Some LaTeX compilers require you to use \dontprintsemicolon instead
%\KwIn{A finite set $A=\{a_1, a_2, \ldots, a_n\}$ of integers}
%\KwOut{The largest element in the set}
%\Comment{Always insert in the first stage}\;
$l_1 \gets h_1(iKey)$\Comment{Always insert in the first stage}\;
\If{$key_1[l_1] = iKey$}{
  $val_1[l_1] \gets val_1[l_1] + 1$\;
  end processing\;
}\ElseIf{$l_1$ is an empty slot}{
  $(key_1[l_1], val_1[l_1]) \gets (iKey, 1)$\;
  end processing\;
}\Else{
  $(cKey, cVal) \gets (key_1[l_1], val_1[l_1])$\;
  $(key_1[l_1], val_1[l_1]) \gets (iKey, 1)$\;
}
\Comment{Track a rolling minimum}\;
\For{$i \gets 2$ \textbf{to} $d$} {
$l_i \gets h_i(cKey)$\;
  \If{$key_i[l_i] = cKey$ } { \label{algline:reg1}  \Comment{Read \red{$key_i$}} \;
    $val_i[l_i] \gets val_i[l_i] + cVal$\;  \Comment{R/W \blue{$val_i$} }\;
    end processing\;
  }
  \ElseIf{$[l_i]$ is an empty slot}{
    $(key_i[l_i], val_i[l_i]) \gets (cKey, CVal)$ \; \Comment{Read  \red{$key_i$}, \blue{$val$} }\;
    end processing\;
  }
  \ElseIf{$val_i[l_i] < cVal$ \label{algline:reg2} }{ \Comment{Condition on \blue{$val_i$}; \red{Violating Restriction I}}\;
    swap $(cKey, cVal) \Leftrightarrow (key_i[l_i], val_i[l_i])$ \label{algline:reg3} \; \Comment{R/W \red{$key_i$} }\;
  }
}
\caption{HashPipe~\cite{HashPipe} heavy hitter algorithm}
\label{algo:HashPipe}
\end{algorithm}

We now scrutinize the register memory access to different arrays of HashPipe, as highlighted  in Algorithm~\ref{algo:HashPipe}.
If we look at Line~\ref{algline:reg1} and Line~\ref{algline:reg3}, they both access the register array $key$ holding flow identifiers. The single stage memory access restriction requires that line~\ref{algline:reg1} through line~\ref{algline:reg3} would be placed within the same hardware pipeline stage. 
% Note: changed names to be uniform regarding our new list of Restrictions. Remove reference to VLIW.

However, the execution flow is branched in line~\ref{algline:reg2} based on the values in another register array $(val)$. Such branching violates the limited branching restriction and may not be easily implemented within a single hardware pipeline stage in today's programmable switches. 
% Thus, the HashPipe algorithm is difficult to implement in today's programmable switches, which does not allow such sophisticated in-stage branching involving values from multiple register arrays.
% Note: mainly addressing the reviewer's concern, without directly blaming Tofino.
Referring to the model presented in \cite{Domino}, to implement HashPipe, 
the simple \emph{RAW} %
\footnote{The RAW action unit is capable of Reading an element from register memory, Add a value to it, and Write it back. See \cite{Domino}.}
action atoms at each stage are inadequate, and at least 
\emph{Paired} %
\footnote{The Paired action unit is capable of reading two different elements from register memory, conditionally branch twice (two nested \emph{if}s), perform addition or subtraction to the elements, and write two new values back. See \cite{Domino}.} 
action atoms are required.
While the RMT architecture~\cite{RMT} does not specifically define what features the action units need to support, Paired action atoms are more expensive to implement than RAW atoms and require 14x larger chip area than RAW atoms~\cite{Domino}.  We strive to design our measurement algorithm to only require the simpler RAW atoms.

Without such atoms, it is difficult to conditionally update a flow entry while simultaneously incrementing the corresponding counters. As long as we place flow identifier and counter in two separate register arrays, this seemingly innocuous set of operations has some inevitable in-stage branching: if we access flow identifiers first, we need to:
(i) Read flow ID from flow entry array;
(ii) If ID matched, increment counter; otherwise, compute some condition on the counter;
(iii) If the condition is satisfied, replace flow ID.
This leads to a write to flow entry register memory conditioned on reading from another counter register memory. Therefore, branching within the stage is inevitable.

Some may argue that we can cleverly rearrange the operations to mitigate the branching; however, even if we access the counter first, we still encounter the same restriction:
(i) Read a counter from the counter register memory;
(ii) Read flow ID; if ID not matched, use the counter to decide whether to replace flow ID;
(iii) Write the incremented counter value, if the ID matched.
Again, the conditional write after reading another register forces a branching within a hardware pipeline stage, making it challenging to implement in today's programmable switches.

\paragraph{PRECISION's solution}
The implementation of PRECISION is even more challenging. 
We decide to replace an entry after knowing the minimum sampled counter value, but we only know this value after reaching the end of the pipeline, at which point it is too late to write to the register memory of earlier stages.

We resolve this difficulty using the recirculation feature on switches \cite{P4,Arista}, that allows packets to traverse the pipeline again, removing all conditional branching for register access. 
When a packet leaves the last stage of the pipeline, instead of leaving the switch, we may choose to bring it to the beginning of the pipeline and go through all stages again.  We can use metadata to distinguish between recirculated packets (which should be dropped) and regular packets that should be forwarded to their next hop. 

Using recirculation allows more versatile packet processing at the cost of packet forwarding performance, as the recirculated packet will compete for resources with new incoming packets. However, we believe it's a necessary trade-off to satisfy the no-branching-within-stage constraint for high-performance programmable switches.

At the end of the pipeline, we ignore those packets already matched to flow entries and probabilistically recirculate the other packets using probability $\frac{1}{carry\_min+1}$, where $carry\_min$ is the value of the minimum sampled entry. The recirculated packet will evict and replace the minimum sampled entry. It will traverse the pipeline again to write its flow identifier into the corresponding register array when it arrives at the right pipeline stage, and also update the corresponding counter to a new value $carry\_min+1$.  In expectation, for every unmatched packet we increased the count for its flow by 1.

As a packet recirculates, it introduces a delay between the point in which we chose to admit it, and when it writes its flow ID on its second pipeline traversal.
During this period other packets may increment the counter, an effect that will be overridden. Thus, the recirculation delay may have some impact on PRECISION's accuracy. 
The duration of such delay is architecture-specific and depends on both the queuing before entering the pipeline and the length of the pipeline. In Section~\ref{compromize:packetDelay}, we evaluate its impact on PRECISION's accuracy and show that PRECISION is insensitive to such delay. 

\subsection{Efficient recirculation}
We avoid packet reordering and minimize application-level performance impact by using the \textit{clone-and-recirculate} primitive, which routes the original packet out of the switch as usual, and drops the cloned packet after it finishes the second pipeline~traversal. This implies that in-flow packet order is preserved and that a packet can only be recirculated once. 

Since recirculated packets compete for resources with incoming packets, we would like to minimize the number of recirculated packets.
Fortunately, recirculation happens only for unmatched packets, with a probability of $\frac{1}{carry\_min + 1}$, where $carry\_min$ is the minimal counter value the packet saw in all pipeline stages. Thus, recirculation becomes less frequent as the measurement progresses and the counters grow.  In Section~\ref{sec:recirc} we show that expected number of recirculated packets is asymptotically bounded by the square root of the number of packets.
%As we will show in Section~\ref{sec:recirc}, the total number of recirculations can be bounded by $O(\sqrt{NC})$, where $N$ is the total number of packets gone through the switch and $C$ is the total number of counters.

We can further bound the expected recirculation ratio at the beginning of the execution by initializing all counter registers to a non-zero minimum value. For example, if we initialize all counters to $100$, we also set an upper bound $1\%$ for recirculation probability. In Section~\ref{compromize:initValue} we show that adding an appropriate initial value has a negligible accuracy impact. %on accuracy.
  
\subsection{Approximating the recirculation probabilities}
Recall that the original RAP algorithm admits packets from new flows with probability $\frac{1}{carry\_min+1}$. Intuitively, a flow needs to arrive $carry\_min+1$ times on average to capture a counter with a value of $carry\_min+1$.
%Ran: this is inaccurate.. the estimator is clearly underestimate in expectation.
%Intuitively, this makes the estimator unbiased as a flow needs to arrive $carry\_min+1$ times on average to claim a  $carry\_min+1$ arrivals are needed to claim the counter. 
%times on average, to receive the counter. 

%\ori{A  motivation for using this specific probability is missing (or in previous subsection)} 
It is straightforward to achieve this probability if a random arbitrary-range integer generator is available: we can generate an integer within $[0,carry\_min]$ and check if it's~0. 
However, sometimes we can only obtain random bits from programmable switch's hardware random source, and this effectively limits us to generate random integers within $[0,2^x-1]$ range.  Without the capability to do division or multiplication, we cannot accurately sample with desired probability  $\frac{1}{carry\_min+1}$. As we show in Section~\ref{compromize:restricted}, we can work around this limit without affecting accuracy.

The most simple workaround is to only use probabilities of the form $2^{-x}$. 
Achieving this probability is done by comparing $x$ random bits to zeroes. 
That is, we recirculate unmatched packets with probability $\frac{1}{carry\_min+1}$ rounded to the next smallest
$2^{-x}$. This is a 2-approximation of the desired recirculation probability. The recirculated packet will update the counter to $2^x$.
Rounding is achieved by using a ternary matching over bits of $carry\_min$ variable to find the highest 1 bit. 
The evaluation in Section~\ref{compromize:restricted} shows that this method has a noticeable but acceptable impact on accuracy.

We now introduce a tighter method for approximating the desired recirculation probability. Inspired by floating point arithmetic, we may decompose 
$carry\_min+1=2^y \times T, T\in[8,16)$  
and use a probability of the form  $\frac{1}{2^{y}}\times \frac{1}{ \lfloor T \rfloor}$ to approximate $\frac{1}{carry\_min+1}$ .
We can directly implement the $\frac{1}{2^y}$, while the $\frac{1}{\lfloor T \rfloor}$ is approximated by
randomly generating an integer
between $[0,2^{N}]$ and comparing it against a pre-computed constant $\lfloor \frac{2^{N}}{\lfloor T \rfloor} \rfloor$, via a lookup table.
Further, to avoid non-integer number representation, we always increment the counter value by $1$ upon recirculation.
This achieves a $9/8$-approximation of the desired recirculation probability. 
Our evaluation shows that the accuracy gains are significant. Yet, this method requires additional pipeline~stages.

\begin{algorithm}[t]\small%\footnotesize
\For{$i \gets 1$ \textbf{to} $d$} {
  $l_i \gets h_i(iKey)$\;
  {$*$ Hardware stage $i_A$: access \red{$key_i$} register}\;
	  \If{$key_i[l_i] = iKey$}{
			  $matched_i \gets true$\;
	  }
  {$*$ Hardware stage $i_B$: access \blue{$val_i$} register}\;
	  \If{$matched_i$}{
	  	$val_i[l_i] \gets val_i[l_i] + 1$\;
	  }\Else{
	    $oval_i=val_i[l_i] $
	  }
  {$*$ Hardware stage $i_C$: maintain carry minimum}\;
  \If{($\neg matched_i) \wedge (oval_i < carry\_min$)}{
	    $carry\_min \gets oval_i$\;
	    $min\_stage \gets i$
    }
}
\If{$\bigwedge_{i=1}^d (\neg matched_i )$}{
\Comment{$iKey$ not in cache; do Probabilistic Recirculation.}
	$new\_val=2^{\lceil \log_2(carry\_min) \rceil}$\; 
	{
	Generate random integer $R \in [0,new\_val-1]$, by assembling $\lceil \log_2(carry\_min) \rceil$ random bits}\;
	\If{$R=0$}{
		{clone and recirculate packet}\;	
	}
}
\If{ packet is recirculated }{
  $i \gets min\_stage $\;
  $l_i \gets h_i(iKey)$\;
	  $key_i[l_i] \gets iKey$ \Comment{Hardware stage $i_A$: \red{$key_i$} }\;
	  $val_i[l_i] \gets new\_val$ \Comment{Hardware stage $i_B$: \blue{$val_i$}}\;
  {Drop the cloned copy}\;  
  }
\caption{PRECISION heavy hitter algorithm}
\label{algo:Precision}
\end{algorithm}

\subsection{Putting all adaptations together}
With all the aforementioned hardware-friendly adaptations in mind, we assemble the PRECISION algorithm, which satisfies all hardware-imposed constraints of the RMT architecture. %\ori{let's add some more text of details, maybe referring to the interesting lines and shortly discussing complexity}
Algorithm~\ref{algo:Precision} is a pseudocode version of PRECISION. 
Line 1 reflects PRECISION's $d$-way associative memory access, iterating through each way.
In Line 8 we increment the counter for matched packets, while unmatched packets handled between Line 15 and Line 19. 
We flip a coin in Line 17, and the 2-approximation of recirculation probability manifests in Line 16.  
Recirculated packets update register memory corresponding to their minimal entries. This is described between Line 20 to Line 24. 
We highlighted accesses to register memory in color, note that registers are only accessed once per stage. Each branching fits in a transition between hardware pipeline stages, removing the need to perform in-stage branching.  
% Note: addressing reviewer's confusion regarding "PRECISION also needs branching"

\subsection{Parallelizing actions to reduce hardware stages used}
\label{sec:parallel}
%\textcolor{red}{This is too long; can potentially save 0.25 pages of space here.}

Algorithm~\ref{algo:Precision} presented PRECISION in its most straightforward arrangement, iterating through the $d$-way in tandem, while each uses three pipeline stages. This costs as much as $d\times 3$ hardware pipeline stages for register memory reads. Since the total number of stages is very limited, we explain how to optimize the required number of stages further,
and fit a larger $d$ on the same hardware.
This optimization may also be applicable to other algorithms with a similar repeated register array access pattern.

%we analyze how can we reduce the number of stages required for those register memory reads. %and fit our algorithm into the limited number of hardware stages. 
Intuitively, each `if' in the pseudocode is a branching, separating the algorithm into different hardware stages. However, it may be possible to group independent stages and reduce the total number of hardware stages needed. 

%we need to carefully implement the code with minimal branching. Intuitively, each of the 'If' instructions in the pseudocode can translate to a branch. However, some of these are independent and can be grouped together. 
%\blue{Gil: Danny - can you explicitly state which 'If' instructions in the pseudocode translate to a branch?}

%To meet stringent latency requirement, the hardware switches only have a very limited number of stages, limiting the amount of branching we may use. \ori{can we give some typical numbers? 4-16?} 

In our implementation, PRECISION requires two branching for each of the $d$ ways. That is, it requires three pipeline stages for each way. 
The stages in each way are:\\ 
\textbf{Stage A:} Read flow ID from flow entry array.
{\textit{(branching: does entry's ID match my ID?)}}\\
\textbf{Stage B:} Read/Update %counter
from the counter array.
{\textit{(branching: is counter smaller than the current  minimum?)}}\\
\textbf{Stage C:} Compute and ``carry'' the new minimum~value.

%\ori{was $d$ used earlier to denote number of stages? also used later for something else}
If we indeed require three hardware stages for each pair of flow entry array and counter array, a switch with $X$ physical stages can at most implement PRECISION with $d=X/3$. This assumes that all pipeline stages serve for heavy-hitter detection. In practice, we would like to leave enough pipeline stages for other network applications.

\begin{figure}[t]
\centering
\includegraphics[width=0.5\textwidth]{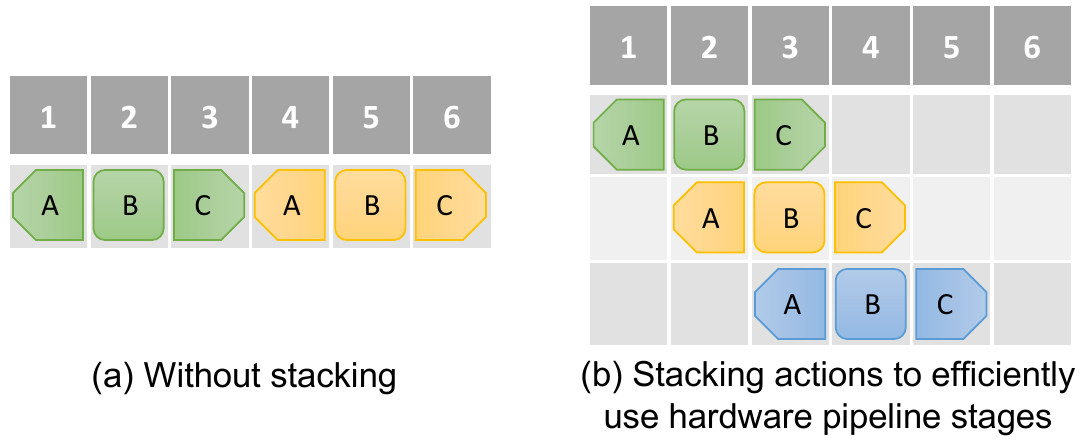}
\caption{We reduce the number of pipeline stages used by stacking together independent actions between different ways. For $d$-way PRECISION, this reduces the number of pipeline stages required from $d\times 3$ to $d+2$. 
%\ori{give intuition for stacking. clarify the meaning of numbers (stages) and make them bigger. mention meaning of shape and color here (should we divide the colors differently?}\danny{done.}
%TODO: better illustration figure of the "stacking".
}
\label{fig:illustration2}
\ifdefined\submissionVersion
	\vskip -0.5cm
\fi
\end{figure}

However, our algorithm does not have a hard dependency between different groups of stages. If we denote the $d$ ways as $1$, $2$, $3$ and the three pipeline stages for each action as $A$, $B$, and $C$, we can observe that (for example) $2_A$ and $1_C$ are independent. 
%\ori{numbering and notations should be clarified} 
Thus, it's not necessary to serialize everything into the pattern shown in Figure~2(a). Instead, we can ``stack" operations from different groups together, as shown in Figure~2(b). Specifically,  reading the flow identifier for the next flow entry array can be parallelized with incrementing a counter for the previous way's counter array and so forth. 
Therefore, we can parallelize different execution stages of multiple ways as there is no direct causal relation or data dependency between stage action $(i+1)_B$ and $i_C$, or between $(i+1)_A$ and $i_B$. Thus, by using the stacking pattern shown in Figure~2(b), we reduce the number of required stages to implement $d$-way PRECISION from $d\times 3$ to $d+2$, amortizing to one stage per way. 
\footnote{There is indeed a causal dependency between stage $(i+1)_C$ and $i_C$ when computing the carried minimum value $carry\_min$, thus using only a constant number of 3 hardware stages is not possible.
Also, other hardware constraints that limit the number of parallel actions in one hardware stage exists, but these are less stringent than the limit on the total number of hardware stages.}

For a programmable switch with a limit of $X$ hardware stages, the actual maximum $d$ we can implement will be smaller, because we need extra stages before and after the core algorithm for setup and teardown, such as extracting flow ID and performing random coin-tossing. Furthermore, a network switch will need to fulfill its regular duties like routing, access control, etc., and would not devote all its resources to the PRECISION algorithm. Nevertheless, we can expect any commodity programmable switch to run the $d=2$ version of PRECISION smoothly, alongside its regular duties. When extra resources are available, we may increase $d$ to improve accuracy as shown in Section~\ref{compromize:associativy}. 

%$d=2$ already yields adequate~performance. \danny{just added:} PRECISION can occupy a few stages close to the beginning of the pipeline, and subsequent available stages can be used by other applications, with PRECISION's output heavy hitter identification result as their input.
%Gil - we said that multiple times already. 

\section{Bounding the Amount of Recirculation}
\label{sec:recirc}
Here we show a bound on the total number of packet-recirculations. Our main result, Theorem~\ref{thm1}, shows that the number of recirculated packets is sublinear. Combined with our approach for setting initial values to counters to avoid high recirculation ratio at the beginning, we maintain recirculation at acceptable levels throughout the measurement. %minimize the recirculation to  PRECISION's impact on throughput. 
\ifdefined\submissionVersion
We start with a lemma that explains how recirculation behaves with a single counter. 
The proof is deferred to the full version~\cite{fullVersion}.
\else
We first present an auxiliary lemma about summing random variables.
The proof is deferred to the appendix.
\fi
%In this section we present some analysis on bounding the number of recirculated packets while running PRECISION. Theorem~\ref{thm1} implies that 
%the number of recirculated packets is sublinear in the number of packets.
%This means that recirculated packets will not affect the throughput significantly in long %measurements. 
%Combined with our approach for setting initial values for counters to avoid high recirculation ratio at the beginning, we minimize PRECISION's impact on throughput.
%We first present an auxiliary lemma about summing random variables.
\ifdefined\submissionVersion
\else
\begin{lemma}\label{lem:iidGeometric}
Fix some $p\in(0,1]$, $T\in\mathbb N^+$ 
and let $X_1,X_2,\ldots\sim Geo(p)$ be independent geometric random variables with mean $1/p$. Denote by $Z\triangleq \min\{{n\in\mathbb N}\mid \sum\limits_{i=1}^nX_i\ge T\}$ the minimal number $n$ such that the sum of $X_1,\ldots,X_n$ exceeds the threshold $T$. Then $\mathbb E[Z] = p(T-1)+1$.
\end{lemma}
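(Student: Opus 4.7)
My plan is to give a short probabilistic-interpretation proof that avoids any direct summation. I will view the sequence $X_1,X_2,\ldots$ as the inter-arrival times of successes in an i.i.d. Bernoulli$(p)$ process $B_1,B_2,\ldots$, so that $S_n \triangleq \sum_{i=1}^n X_i$ is the index of the $n$-th success, i.e.\ the smallest $t$ with $\sum_{j=1}^t B_j = n$. This interpretation is valid precisely because $Geo(p)$ on $\{1,2,\ldots\}$ with mean $1/p$ is the distribution of waiting times between successive $1$'s in such a process.

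Under this coupling, I would rewrite the definition of $Z$ in terms of the process. The event $\{Z = k\}$ is exactly the event that the $(k-1)$-th success occurs at some time $\le T-1$ while the $k$-th success occurs at time $\ge T$. In other words, $Z-1$ equals the total number of successes that occur during the first $T-1$ trials, namely $Z-1 = \sum_{j=1}^{T-1} B_j$. (The $k=1$ edge case corresponds to having zero successes in the first $T-1$ trials, i.e.\ $X_1 \ge T$, which is consistent.)

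From here the conclusion is immediate: $\sum_{j=1}^{T-1} B_j \sim \mathrm{Binomial}(T-1,p)$, so $\mathbb E[Z] - 1 = (T-1)p$, giving $\mathbb E[Z] = p(T-1)+1$. The only step that needs to be spelled out carefully is the equivalence $\{Z \le k\} \Leftrightarrow \{S_k \ge T\} \Leftrightarrow \{\sum_{j=1}^{T-1} B_j \le k-1\}$, which is just unpacking the renewal/counting-process duality and which I expect to be the only place a reader might pause. No tail-sum, no Wald, no generating functions are needed; the main obstacle is simply making sure the boundary cases ($T=1$, $Z=1$) are handled correctly by the duality, which they are since both sides agree that $Z=1$ almost surely when $T=1$ and hence $\mathbb E[Z]=1 = p\cdot 0 + 1$.
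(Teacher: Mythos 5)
Your proof is correct and is essentially the same argument the paper gives: both identify the prefix sums $S_n$ with the success times of a Bernoulli$(p)$ process, observe that $Z-1$ equals the number of successes among the first $T-1$ trials, and conclude via the $\mathrm{Binomial}(T-1,p)$ mean. Your phrasing of the duality $\{Z\le k\}\Leftrightarrow\{S_k\ge T\}\Leftrightarrow\{\sum_{j=1}^{T-1}B_j\le k-1\}$ is in fact stated more cleanly than the paper's version, which describes the same binomial count in terms of which integers in $\{1,\ldots,T\}$ are (or are not) prefix sums.
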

\fi
\ifdefined\submissionVersion
\else
Next, we show a bound on the expected number of packets that would be sampled by a single-counter PRECISION instance. For this, we denote by $X_i$ the number of packets between the time that the counter has reached a value of $i$ and the time it first reaches of $i+1$. The proof is delayed to the appendix.
\fi
\ifdefined\submissionVersion
\begin{lemma}\label{lem:singleCounter}
Fix some $T\in\mathbb N^+$ and let $\set{X_i\sim Geo(1/i)\mid i\in\mathbb N}$ denote independent geometric random variables such that $\mathbb E[X_i]=i$. Let $A\triangleq \min\{{n\in\mathbb N}\mid \sum\limits_{i=1}^n X_i\ge T\}$. Then $\mathbb E[A] \le 2\sqrt{T}$.
\end{lemma}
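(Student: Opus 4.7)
The plan is to derive $\mathbb{E}[A]$ by evaluating $\mathbb{E}[S_A]$, where $S_A \triangleq \sum_{i=1}^A X_i$, in two different ways and then applying Jensen's inequality. A useful preliminary observation is that each $X_i \geq 1$ almost surely (since $Geo(1/i)$ is supported on $\{1,2,\ldots\}$), hence $S_T \geq T$ and $A \leq T$ with probability one. This bounds all expectations by polynomials in $T$ and eliminates convergence concerns throughout.

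For the first evaluation of $\mathbb{E}[S_A]$, I would exploit the memoryless property of the geometric distribution. Conditional on $A = k$ and $S_{k-1} = s < T$, the variable $X_k$ is distributed as $Geo(1/k)$ conditioned on $X_k \geq T - s$; by memorylessness its conditional expectation is $(T - s - 1) + k$. Hence $\mathbb{E}[S_A \mid A = k, S_{k-1} = s] = s + (T - s - 1) + k = T - 1 + k$, and averaging yields $\mathbb{E}[S_A] = T - 1 + \mathbb{E}[A]$. For the second evaluation, I would use a Wald-style identity for independent but not identically distributed summands: writing $S_A = \sum_{i \geq 1} X_i \mathbf{1}_{\{A \geq i\}}$ and noting that $\{A \geq i\} = \{S_{i-1} < T\}$ depends only on $X_1,\ldots,X_{i-1}$ (so the indicator is independent of $X_i$), Fubini gives $\mathbb{E}[S_A] = \sum_{i\geq 1} i\cdot\Pr[A \geq i] = \mathbb{E}[A(A+1)/2]$.

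Equating the two expressions yields $\mathbb{E}[A^2] = 2T - 2 + \mathbb{E}[A]$. Applying Jensen's inequality $\mathbb{E}[A]^2 \leq \mathbb{E}[A^2]$ and setting $\mu = \mathbb{E}[A]$ produces the quadratic inequality $\mu^2 - \mu - (2T - 2) \leq 0$, whose positive root bounds $\mu \leq (1 + \sqrt{8T - 7})/2 \leq 2\sqrt{T}$ for all $T \geq 1$ (the last step is a routine squaring that reduces to $\sqrt{T} \leq T + 1$). I expect the main delicacy to be presenting the Wald-style interchange cleanly; the key point is that the stopping event $\{A \geq i\}$ is measurable with respect to the earlier variables and hence independent of $X_i$, while the a.s.\ bound $A \leq T$ removes any integrability issue in swapping sum and expectation.
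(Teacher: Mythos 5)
Your proof is correct, but it takes a genuinely different route from the paper's. The paper discards the first $\sqrt T-1$ variables outright, stochastically dominates each remaining $X_i$ (for $i\ge\sqrt T$ the parameter $1/i$ is at most $T^{-1/2}$) by an i.i.d.\ $Geo(T^{-1/2})$ variable, and then invokes the separate auxiliary Lemma~\ref{lem:iidGeometric} on i.i.d.\ geometrics to bound the expected number of additional variables by $\sqrt T+1$, giving $2\sqrt T$ in total. You instead compute $\mathbb E[S_A]$ in two ways --- once as $T-1+\mathbb E[A]$ via memorylessness of the overshoot, once as $\mathbb E[A(A+1)/2]$ via a Wald-type decomposition, which is justified because $\{A\ge i\}=\{S_{i-1}<T\}$ is $\sigma(X_1,\dots,X_{i-1})$-measurable and $A\le T$ almost surely --- equate them to obtain the exact identity $\mathbb E[A^2]=2T-2+\mathbb E[A]$, and finish with Jensen. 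I checked the individual steps (the conditional overshoot computation, the independence of $X_i$ from the stopping indicator, and the final algebra reducing to $\sqrt T\le T+1$) and they all hold. Your route buys a self-contained argument that needs no auxiliary lemma, an exact second-moment identity, and a strictly sharper constant, since $(1+\sqrt{8T-7})/2\approx\sqrt{2T}$ beats $2\sqrt T$ and is in fact tight for $T\in\{1,2,3\}$; it also avoids the paper's implicit (and slightly sloppy) treatment of $\sqrt T$ as an integer index. What the paper's route buys is modularity --- the i.i.d.\ renewal lemma is reusable and the domination step is short --- and a more immediately visible intuition for why the answer scales as $\sqrt T$.
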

\else
\begin{lemma}\label{lem:singleCounter}
Fix some $T\in\mathbb N^+$ and let $\set{X_i\sim Geo(1/i)\mid i\in\mathbb N}$ denote independent geometric variables such that the expectation of $X_i$ is $i$. Similarly to the above lemma, let $A\triangleq \min\{{n\in\mathbb N}\mid \sum\limits_{i=1}^n X_i\ge T\}$ denote the number of variables needed to cross the threshold $T$. Then $\mathbb E[A] \le 2\sqrt{T}$.
\end{lemma}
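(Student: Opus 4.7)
My plan is to derive two exact expressions for $\mathbb{E}[S_A]$, where $S_n = \sum_{i=1}^n X_i$, equate them, and invoke Jensen's inequality. Since each $X_i \ge 1$, the stopping time satisfies $A \le T$ pointwise, so all expectations that appear are finite and I do not need to worry about integrability or optional-stopping regularity conditions.

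The first identity is $\mathbb{E}[S_A] = \mathbb{E}[A(A+1)/2]$. This follows from Wald-style counting: the event $\{A \ge i\}$ coincides with $\{S_{i-1} < T\}$, which depends only on $X_1,\ldots,X_{i-1}$ and is therefore independent of $X_i$. Hence
\[
\mathbb{E}[S_A] = \sum_{i \ge 1} \mathbb{E}[X_i \cdot \mathbf{1}\{A \ge i\}] = \sum_{i \ge 1} i \cdot \Pr[A \ge i] = \mathbb{E}\!\left[\tfrac{A(A+1)}{2}\right].
\]

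The second identity is $\mathbb{E}[S_A] = T - 1 + \mathbb{E}[A]$, which I would obtain by computing the expected overshoot $\mathbb{E}[S_A - T]$. Conditioned on $\{A = a,\, S_{a-1} = s\}$ with $s < T$, the variable $X_a \sim \text{Geo}(1/a)$ is restricted to $X_a \ge T - s$; the memoryless property then yields $\mathbb{E}[X_a - (T - s) \mid X_a \ge T - s] = a - 1$. Weighting by $\Pr[A = a,\, S_{a-1} = s]$ and summing, the overshoot collapses to $\mathbb{E}[S_A - T] = \mathbb{E}[A] - 1$.

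Equating the two expressions gives $\mathbb{E}[A^2] = 2T - 2 + \mathbb{E}[A]$. Applying Jensen's inequality $(\mathbb{E}[A])^2 \le \mathbb{E}[A^2]$ and solving the resulting quadratic $\mu^2 - \mu \le 2T - 2$ yields $\mathbb{E}[A] \le (1 + \sqrt{8T-7})/2$, which a short algebraic check shows is at most $2\sqrt{T}$ for every $T \ge 1$ (equivalently, $T - \sqrt{T} + 1 \ge 0$, a quadratic in $\sqrt{T}$ with negative discriminant). The main technical step I anticipate is the overshoot computation: it requires carefully handling the joint conditioning on the stopping event and on $S_{a-1}$ before invoking memorylessness for $X_a$.
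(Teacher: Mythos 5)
Your proof is correct, and it takes a genuinely different route from the paper's. The paper argues by coupling and stochastic domination: it discards the first $\sqrt T-1$ variables (which can contribute at most $\sqrt T-1$ to $A$), observes that for $i\ge \sqrt T$ each $X_i\sim Geo(1/i)$ stochastically dominates an i.i.d.\ copy of $Geo(T^{-1/2})$ (so the true sum crosses $T$ no later than the i.i.d.\ surrogate sum does), and then invokes the auxiliary Lemma~\ref{lem:iidGeometric} to bound the expected number of i.i.d.\ $Geo(T^{-1/2})$ variables needed by $T^{-1/2}(T-1)+1\le \sqrt T+1$, for a total of $2\sqrt T$. You instead derive the exact identity $\mathbb E[A^2]=2T-2+\mathbb E[A]$ from two Wald-style computations of $\mathbb E[S_A]$ --- one using the independence of $X_i$ from $\{A\ge i\}=\{S_{i-1}<T\}$, the other using the memoryless overshoot $\mathbb E[X_a-(T-s)\mid X_a\ge T-s]=a-1$ --- and close with Jensen. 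Both arguments are sound; your overshoot step is handled correctly since the event $\{A=a,\,S_{a-1}=s\}$ restricts $X_a$ exactly to $\{X_a\ge T-s\}$ while $\{S_{a-1}=s<T\}$ is independent of $X_a$. What your route buys: it is self-contained (no need for Lemma~\ref{lem:iidGeometric}), it avoids the paper's implicit treatment of $\sqrt T$ as an integer, and it yields the sharper bound $\mathbb E[A]\le (1+\sqrt{8T-7})/2\approx\sqrt{2T}$, a constant-factor improvement over $2\sqrt T$ that would translate directly into a better constant in Theorem~\ref{thm1}. What the paper's route buys is modularity and a more transparent intuition (replace slow-growing heterogeneous geometrics by a homogeneous worst case), at the cost of a looser constant.
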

\fi
We now present the main theorem.
%Note that while insertion probability $1/i$ assumes that we can achieve ideal sampling probability, Restricted Sampling only decreases recirculation~further.
Note that here we assume ideal random recirculation probability $1/i$, and the approximation techniques only reduce recirculation~further.
\begin{theorem}
\label{thm1}
Denote the number of packets in the stream by $N$ and the number of counters by $C$. The expected number of recirculated packets is $O(\sqrt {NC})$. 
\end{theorem}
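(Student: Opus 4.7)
The plan is to charge each recirculation to a specific counter slot and then aggregate the per-slot bound from Lemma~\ref{lem:singleCounter} via Cauchy--Schwarz.

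First, I would partition the recirculations across the $C$ counter slots. For each slot $j \in \{1,\ldots,C\}$, let $T_j$ denote the number of \emph{unmatched} packets during the execution for which slot $j$ is selected as the minimum among the $d$ sampled counters (and is therefore the unique candidate for eviction). Every unmatched packet picks exactly one such slot, and matched packets never trigger recirculation, so $\sum_{j=1}^{C} T_j \le N$ deterministically.

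Next, I would apply Lemma~\ref{lem:singleCounter} per slot. Conditioning on slot $j$ receiving $t$ directed unmatched packets, the probabilistic-recirculation rule dictates that while the counter sits at value $v$, each directed packet independently triggers a recirculation with probability $1/(v{+}1)$, so the number of directed packets between consecutive recirculations is distributed as $Geo(1/(v{+}1))$. This matches the setup of Lemma~\ref{lem:singleCounter}, which gives that the number of recirculations charged to slot $j$ is at most $2\sqrt{t}$ in expectation. Jensen's inequality (concavity of $\sqrt{\cdot}$) then yields $\mathbb{E}[\mathrm{recirc}_j] \le 2\,\mathbb{E}[\sqrt{T_j}] \le 2\sqrt{\mathbb{E}[T_j]}$.

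Finally, summing over slots and applying Cauchy--Schwarz gives
\[
\mathbb{E}\!\left[\sum_{j=1}^C \mathrm{recirc}_j\right] \le 2\sum_{j=1}^C \sqrt{\mathbb{E}[T_j]} \le 2\sqrt{C \cdot \textstyle\sum_j \mathbb{E}[T_j]} \le 2\sqrt{CN},
\]
which is the claimed $O(\sqrt{NC})$ bound.

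The main obstacle is aligning the idealized model of Lemma~\ref{lem:singleCounter} with the actual PRECISION dynamics. In the lemma, the counter advances only through the geometric-sampling mechanism, but in reality a slot's counter is also incremented deterministically whenever a matched packet hits its stored flow ID. These deterministic hits only make the counter grow \emph{faster}, which in turn only \emph{decreases} future recirculation probabilities; thus the true recirculation count for slot $j$ is stochastically dominated by the pure-sampling process governed by the lemma. Making this coupling precise (and confirming it is insensitive to the interleaving between matched and unmatched arrivals at slot $j$) is the delicate part; the Cauchy--Schwarz aggregation in the last step is routine.
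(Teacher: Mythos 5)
Your proof follows essentially the same route as the paper's: your per-slot quantities $T_j$ are exactly the paper's $N_i$ (the number of times slot $i$ is the minimal entry on an unmatched packet's path), Lemma~\ref{lem:singleCounter} is invoked once per slot, and your Cauchy--Schwarz aggregation is the paper's ``concaveness of the square root'' step giving $\sum_j 2\sqrt{N_j}\le 2\sqrt{C\sum_j N_j}\le 2\sqrt{NC}$. The only difference is that you are somewhat more careful than the paper, adding the Jensen step for the randomness of $T_j$ and the stochastic-domination remark about matched packets inflating the counter, both of which the paper's proof silently elides.
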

\begin{proof}
For $i\in\set{1,\ldots,C}$, let $R_i$ denote the number of times PRECISION recirculates a packet to update the $i$'th counter and by $R$ the overall recirculation. Next, let $N_i$ denote the number of times this counter was probabilistically modified (that is, a packet traversed the entire pipeline, and this counter was the minimal along its path). We have that $\sum_{i=1}^C N_i\le N$ (this is inequality as some packets update their flow counter and are surely not recirculated). According to Lemma~\ref{lem:singleCounter} we have that $\mathbb E[R_i]\le 2\sqrt{N_i}$. 
This gives
\ifdefined\submissionVersion
$\mathbb E[R] 
= \sum_{i=1}^C \mathbb E[R_i] 
\le \sum_{i=1}^C 2\sqrt{N_i}
\le 2\sqrt{\sum_{i=1}^C N_i}
= 2\sqrt{NC},$
\else
$$\mathbb E[R] 
= \sum_{i=1}^C \mathbb E[R_i] 
\le \sum_{i=1}^C 2\sqrt{N_i}
\le 2\sqrt{\sum_{i=1}^C N_i}
= 2\sqrt{NC},$$
\fi
where the last inequality follows from the concaveness of the~square~root.
\end{proof}

\section{Evaluation}
This section presents an evaluation of PRECISION's accuracy and adaptation mechanisms.
We verified PRECISION using Barefoot's Tofino emulator; however, due to performance reasons, we could not use it for the actual evaluation.
Instead, we use Python to implement various measurement algorithms and compare their accuracy. Python-based emulation also allows us to manipulate hardware parameters freely, so we can independently manipulate each hardware restriction.
We start by studying the effect of each hardware restriction on PRECISION's accuracy. 
Next, we compare PRECISION to related work, including HashPipe~\cite{HashPipe},  as well as Space-Saving~\cite{SpaceSavings} and RAP~\cite{RAP} that are not directly implementable on programmable switches.
%\changed{ which was designed for 
%the general model of P4/PISA model but requires paired action atoms to implement. 
%While the RMT architecture does not specifically defines what action atoms are used, 
%paired atoms are currently not supported on Tofino.}

Our evaluation utilizes the following datasets:  
\\
\textbf{CAIDA}: The CAIDA Anonymized Internet Trace 2016 \cite{CAIDA} (in short, \emph{CAIDA}). Data is collected from the `equinix-chicago' Internet backbone link and contains a mix of UDP, TCP, and ICMP packets. We used packets' Source-Destination IP address pair as their flow ID.\\
\textbf{UWISC-DC}: A data center measurement trace recorded at the University of Wisconsin~\cite{IMC10trace}. 
\\
\textbf{UCLA}: The University of California, Los Angeles Computer Science department packet trace (denoted \emph{UCLA})\cite{UCLA}. 
%A packet trace from the edge router of the CS Department of UCLA.
We also tested our algorithm using synthetic trace with Zipf distribution and observed similar results.
%\\
%\textbf{Zipf}: Self-generated traces of identical and independently distributed elements sampled from a Zipf distribution with various skew values ($0.6, 0.8, 1.0, 1.2$ and $1.5$).
%Hereafter, the stream with skewness $X$ is denoted~Zipf($X$). 
	%\ori{let's use Zipf(X) or $Zipf_X$}
	
All experiments were performed with 2 million packets using a software emulated version of PRECISION, \mbox{and we repeated each experiment 10 times.}

\begin{figure}[t]
\centering
\begin{tabular}{cc}\subfloat[\label{fig:assoctive} Frequency Estimation]{\includegraphics[width = 0.45\columnwidth]
			{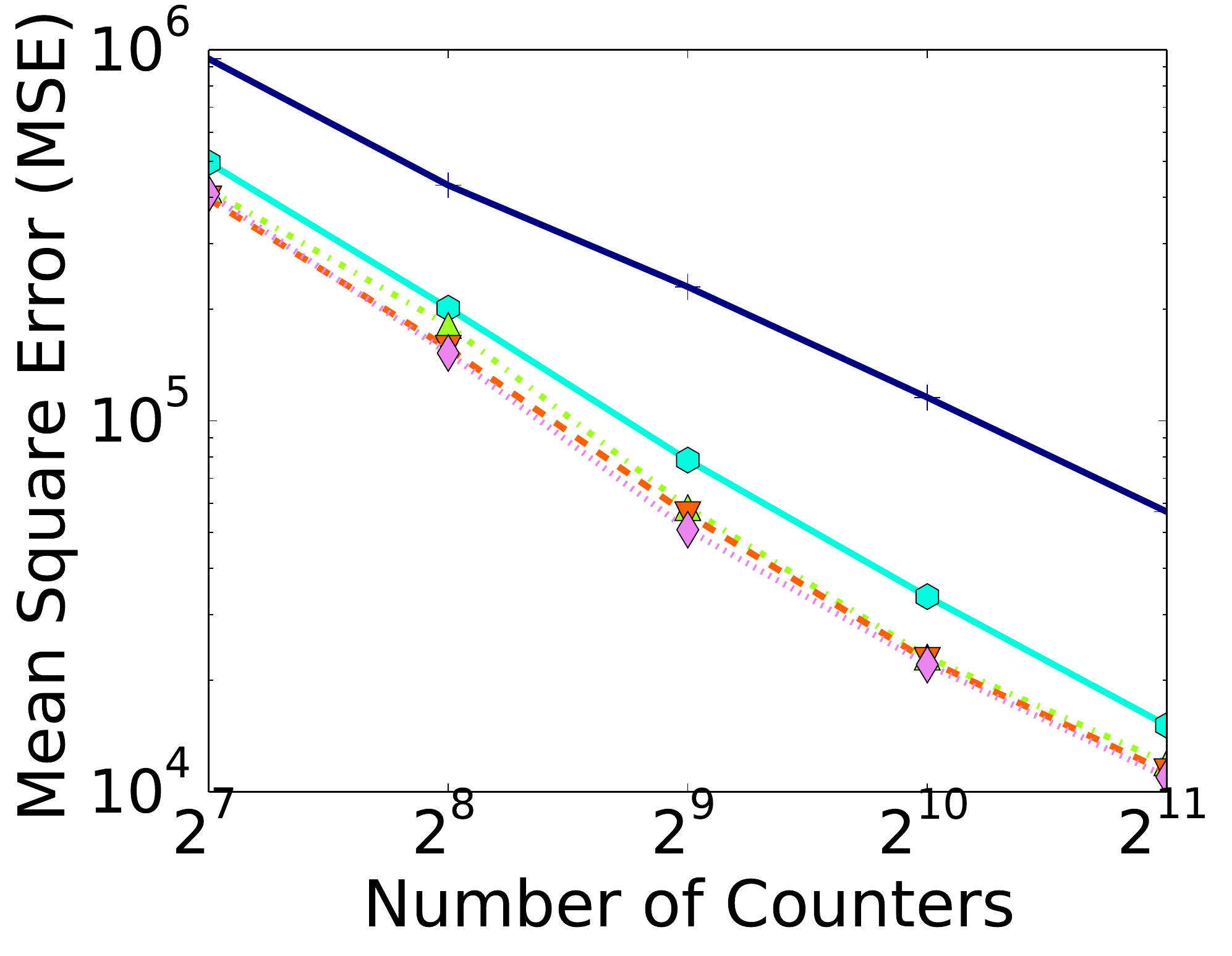}} &    
		\subfloat[\label{fig:assocTopK} Top-128]{\includegraphics[width = 0.45\columnwidth]
			{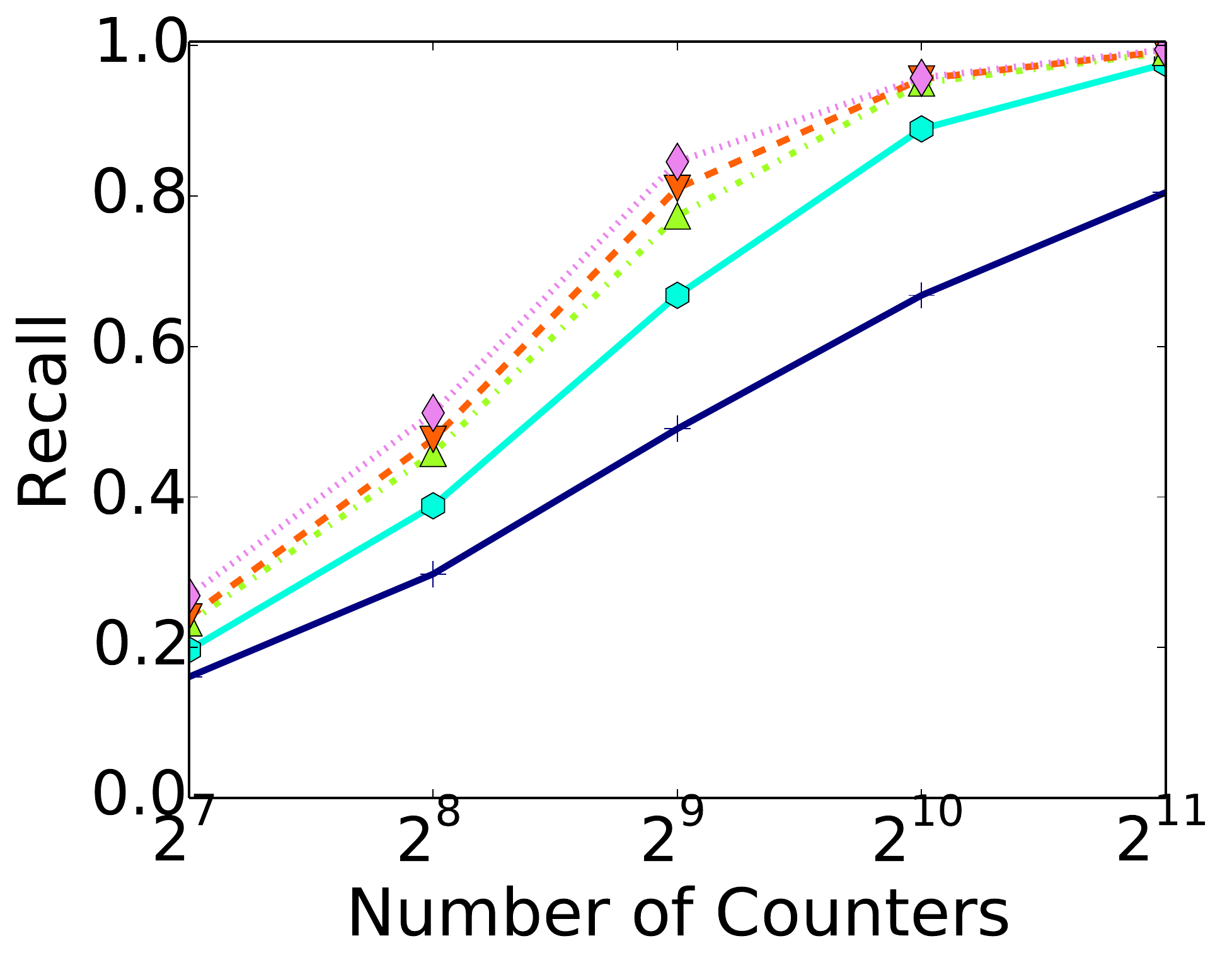}}  
	\end{tabular} 
\includegraphics[height = 0.3cm]%[width = \columnwidth]
		{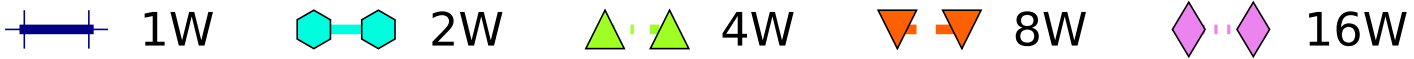}
	\ifdefined\submissionVersion
	\vskip -0.1cm
	\fi
    \caption{\label{fig:assoc_combined} Effect of limited associativity on the frequency estimation error and top-$k$ recall, on CAIDA~trace. Using $d=2$-way is a right balance between achieving good accuracy and saving pipeline stages usage.
    %\red{axes numbers too small}
    }
\end{figure}

\subsection{Limited associativity}
\label{compromize:associativy}
We start with the frequency estimation problem and measure OnArrival error. In this measurement, we evaluate PRECISION with a varying number of ways ($d$) and use the same amount of total memory for all trials. Our results in Figure~\ref{fig:assoctive} show that for this problem 1-way associativity ($d=1$) is a bit too low, but 2-way is already reasonable and further increasing $d$ has diminishing returns. 
Figure~\ref{fig:assocTopK} evaluates how $d$ affects the Recall in top-$k$ problem, using 512 counters to find top-128 flows. In this metric, we see that associativity is more important than in frequency estimation.  $d=2$ requires up to 2$\times$ more counters than $d=16$ to achieve the same recall. Changing to smaller or larger $k$ yields similar observation.

We conclude that limited associativity incurs minimal accuracy loss in frequency estimation and is more noticeable in top-$k$. Our suggestion is to use $d=2$ as it achieves the right balance between accuracy and the number of pipeline stages. %2-way PRECISION compromises only a little in frequency estimation, and although it uses up to 2x memory in top-$k$ compared with larger associativity. We argue that memory is cheaper than extra pipeline stages for a programmable switch.

\begin{figure}[t]
%%%%%
% Since Fig 4 is stacked below fig 3, we can compress some space here (and also be consistent with the fig5/6 stacking)
%%%%%
\vskip -0.5cm
\centering
\begin{tabular}{cc}\subfloat[\label{fig:packetFE} Frequency Estimation]{\includegraphics[width = 0.45\columnwidth]
			{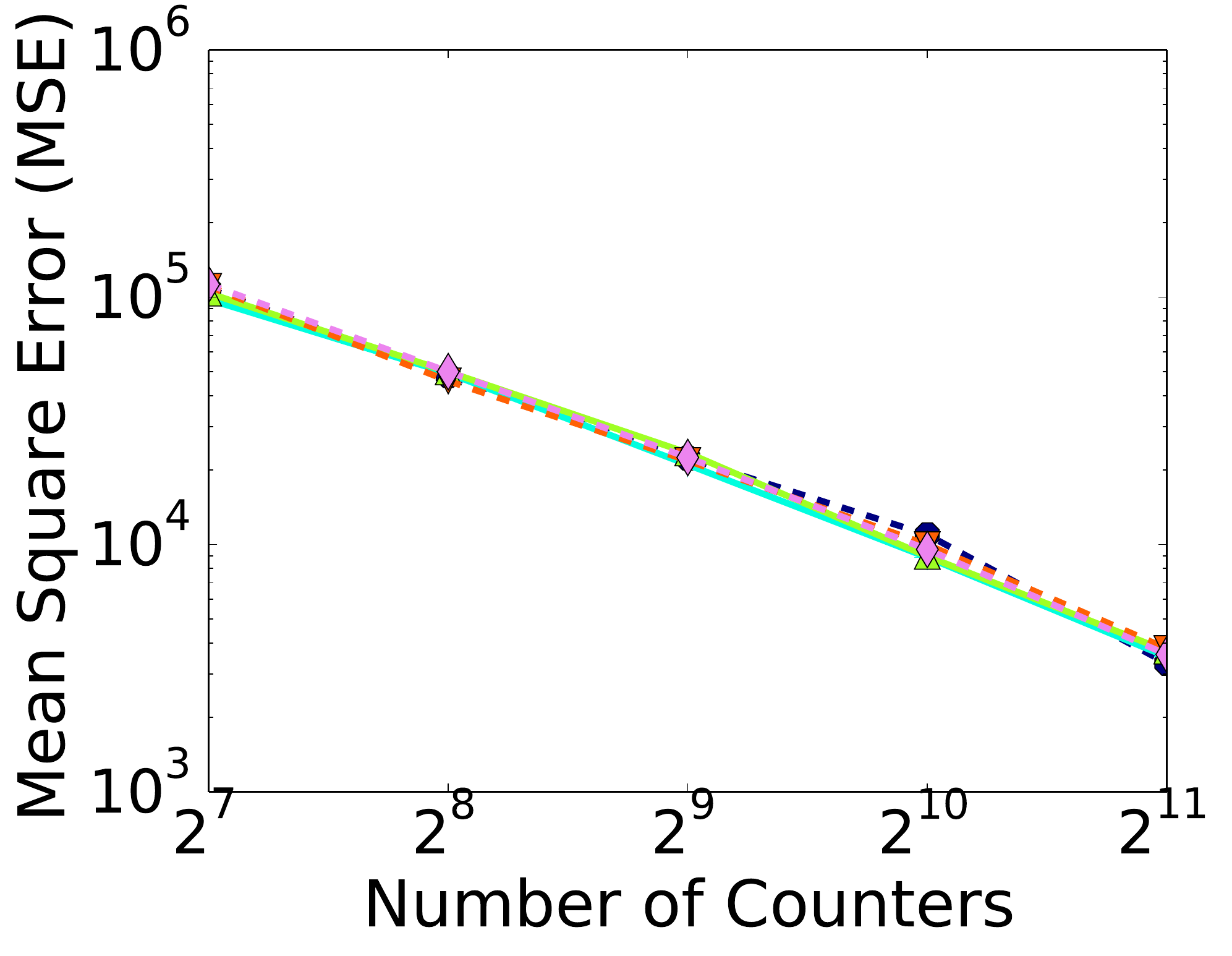}} &    
		\subfloat[\label{fig:packetTopK} Top-128]{\includegraphics[width = 0.45\columnwidth]
			{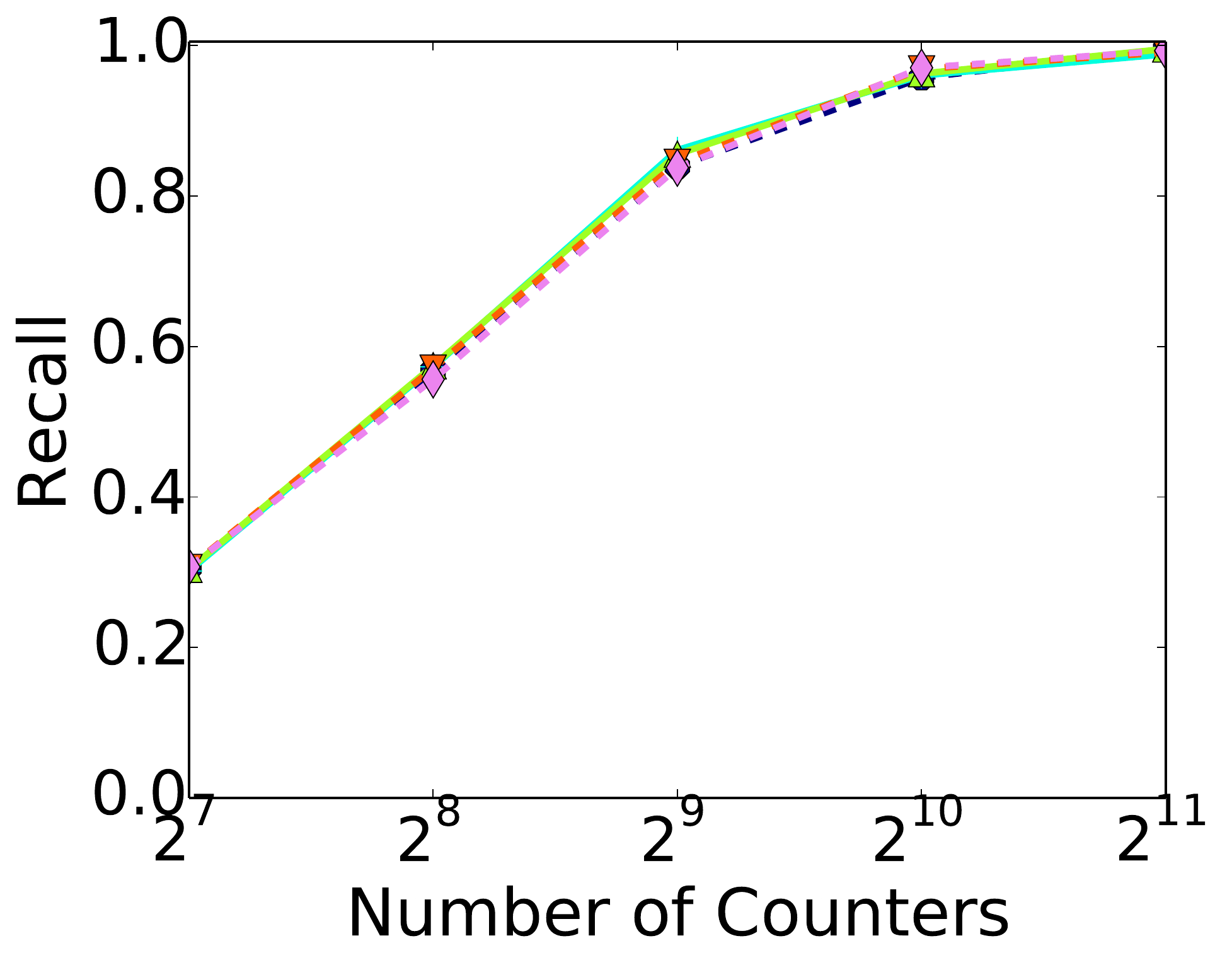}}  
	\end{tabular} 
\includegraphics[width=\columnwidth]%[height = 0.27cm]
		{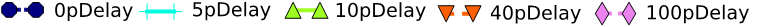}
		\ifdefined\submissionVersion
		\vskip -0.2cm
		\fi
    \caption{\label{fig:delay_combined} Effect of the delayed update on the frequency estimation error and top-$k$ Recall, on CAIDA~trace.  Even a delay of 100 packets has minimal impact on the~accuracy. }
    \ifdefined\submissionVersion
		\vskip -0.5cm
	\fi
\end{figure}
\begin{figure*}[]
\centering
\begin{tabular}{ccc}\subfloat[\label{fig:initFE} Frequency Estimation]{\includegraphics[width = 0.66\columnwidth]
			{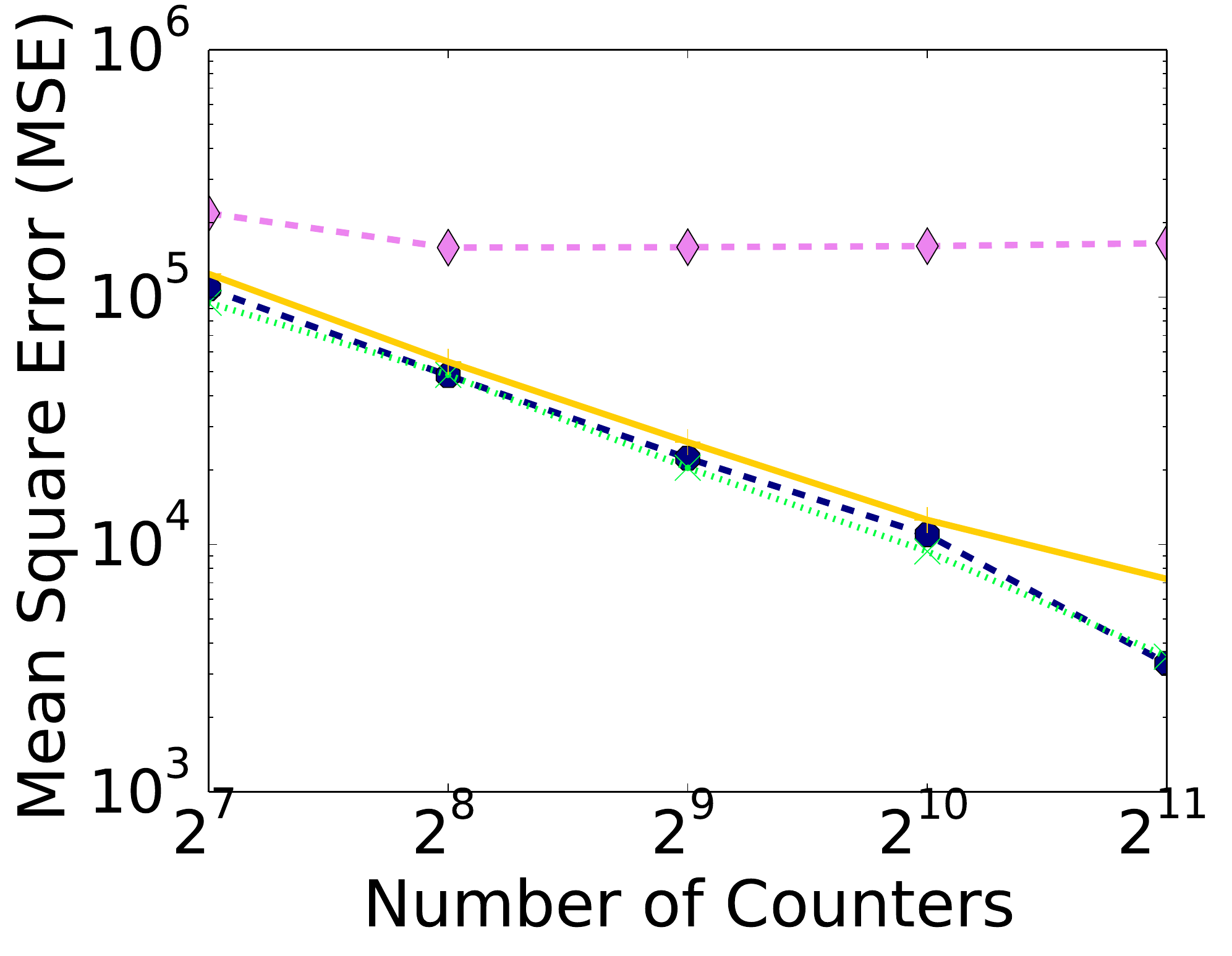}} &  
		\subfloat[\label{fig:initTopK} Top-128]{\includegraphics[width = 0.66\columnwidth]
			{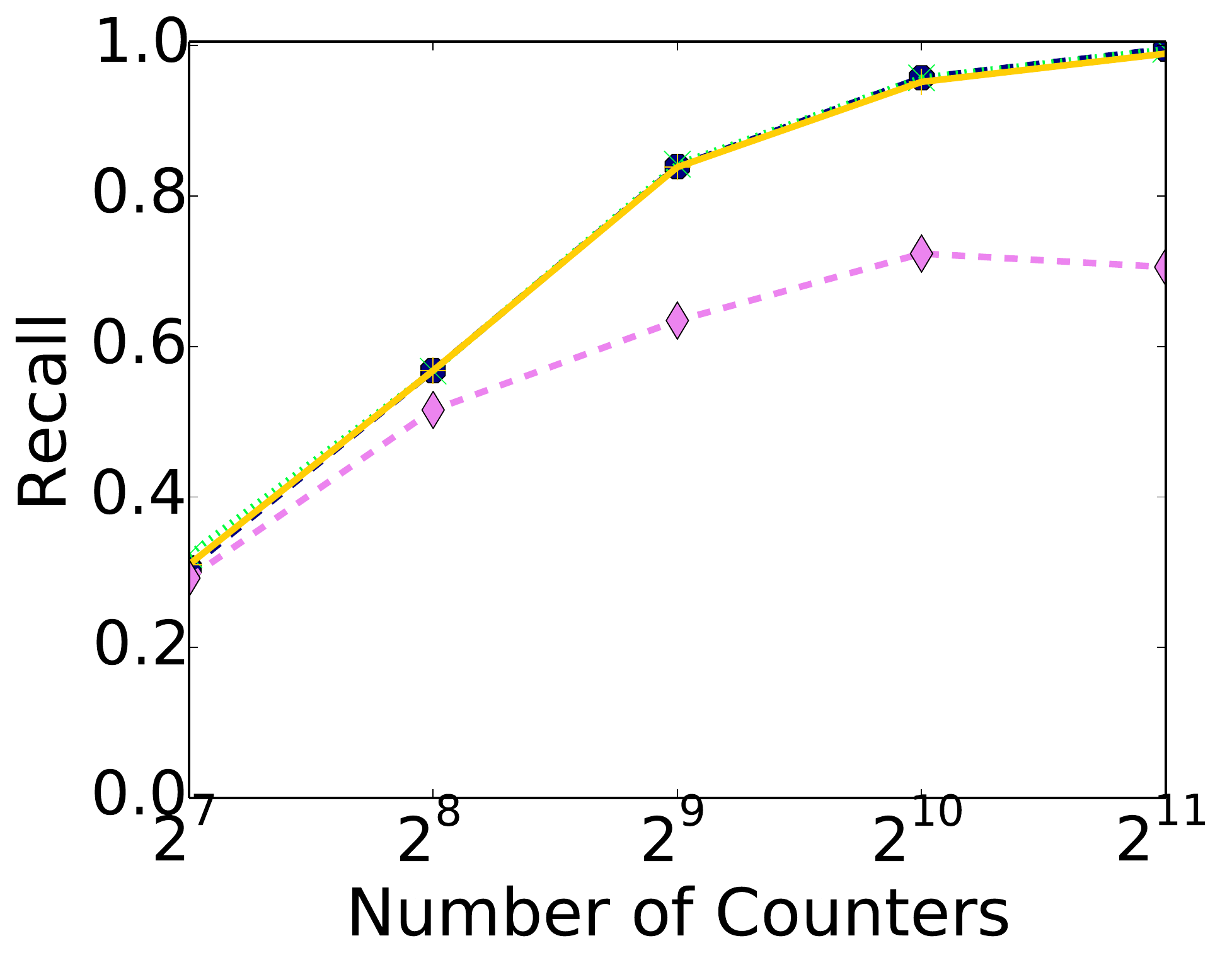}}  &
         \subfloat[\label{fig:InitConverge} Convergence of Top-128 Recall over time]{\includegraphics[width = 0.66\columnwidth]
			{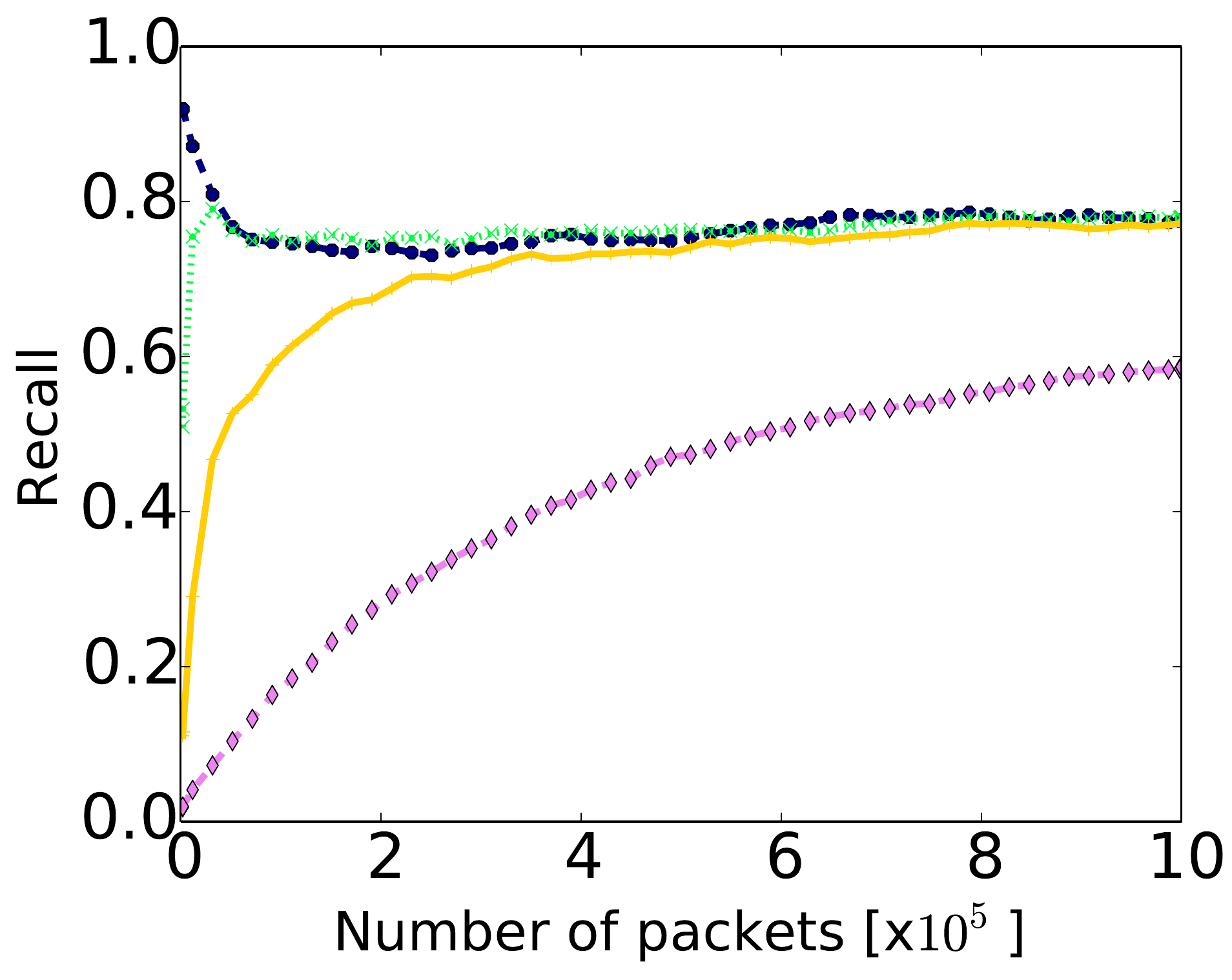}}  
	\end{tabular} 
\includegraphics[height = 0.38cm]
		{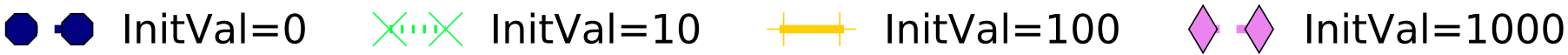}
		\ifdefined\submissionVersion
		\vskip -0.1cm
		\fi
    \caption{\label{fig:init_combined} Effect of initial value on the overall frequency estimation error and top-$k$ recall, on CAIDA trace. 
    An initial value of 100  leads to fast convergence and does not hurt accuracy, while upper-bounding recirculation to $1\%$.
    }		\ifdefined\submissionVersion
		\vskip -0.5cm
		\fi
\end{figure*}

\ifdefined\submissionVersion
		\vskip -0.2cm
\fi
\subsection{Entry update delay}
\label{compromize:packetDelay}
%We evaluate the impact of delayed update when replacing an entry which is caused by the recirculation. 
We now evaluate the impact of update delay between the decision to recirculate and the actual flow entry update.
Instead of using empirical evidence on one particular programmable switch, we simulate various possible delay values in terms of pipeline length.  %number of packets passes through.
Figure~\ref{fig:packetFE} shows results for the MSE (Mean Square Error) in the frequency estimation problem and Figure~\ref{fig:packetTopK} shows the Recall in top-$k$ problem when trying to find the top-128 flows. As can be observed, the lines are almost indistinguishable. That is, update delay has a minor impact on accuracy for both metrics, even for a delay of 100 packets.  We assume that practical switching pipelines would have shorter recirculation delays, as today's programmable switches have much less than 100 stages. A possible reason for this insensitivity to update delays is that replacing flow entries is already a rare and random event. Thus, the actual replacement time barely affects the accuracy even if it slightly deviates from the decision time.

\subsection{Initial value}
\label{compromize:initValue}
We now evaluate the impact of having an initial value larger than zero set to all counters. Intuitively,  the initial value limits the number of recirculated packets, but also requires some time to converge. 
%let the algorithm takes longer to fully utilize all counters. 
This is because having a non-zero initial value means that we need to see more unmatched packets before we claim an entry --- even if that entry is empty.  
Figure~\ref{fig:initFE} show results for the frequency estimation metric. As can be observed, the initial value does affect the accuracy, and the effect is small until initial value 100, but initial value 1,000 causes a large impact.
A similar picture can be observed in Figure~\ref{fig:initTopK} that evaluates Recall in the top-$128$ problem using 512 counters. As depicted,  initial value also has a little impact up to 100, but an initial value of 1,000 results in a poor Recall.

Figure~\ref{fig:InitConverge} completes the picture by showing the change of the Recall over time when trying to find top-$128$. As shown, the convergence time is inversely correlated with the initial value. In most cases, 1 million packets are enough for converging with an initial value of 100. We observed similar behavior for different packet traces. It appears that an initial value of 1,000 requires more packets to converge.% and can be used for longer measurements. 

We conclude that a small initial value has a limited impact on the performance when the measurement is long enough.
To facilitate quick convergence, we suggest an initial value of 100, as it seems reasonable to upper bound recirculation to at most 1\% of the packets, and the convergence time is shorter than 1 million packets, which translates to less than one second on fully-loaded 10~Gbps links. 
\begin{figure*}[h]
%%%%
% This figure follows Fig 5 and can save some space upward. if compression needed
%%%%
  \vskip -0.5cm
	\begin{tabular}{cccc}
		\subfloat[CAIDA]{\includegraphics[width = 0.23\linewidth]
			{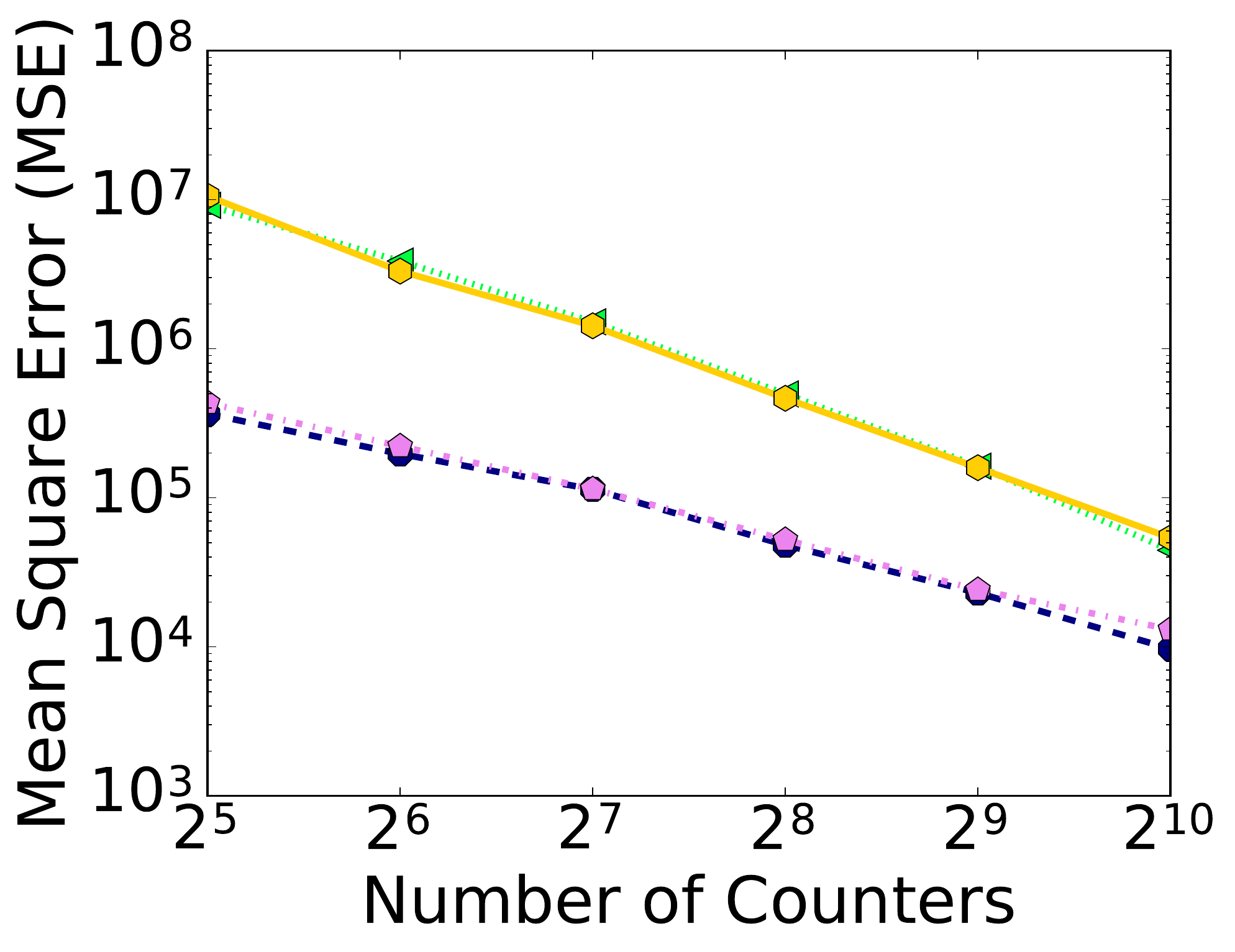}} &    
		\subfloat[UWISC-DC]{\includegraphics[width = 0.23\linewidth]
			{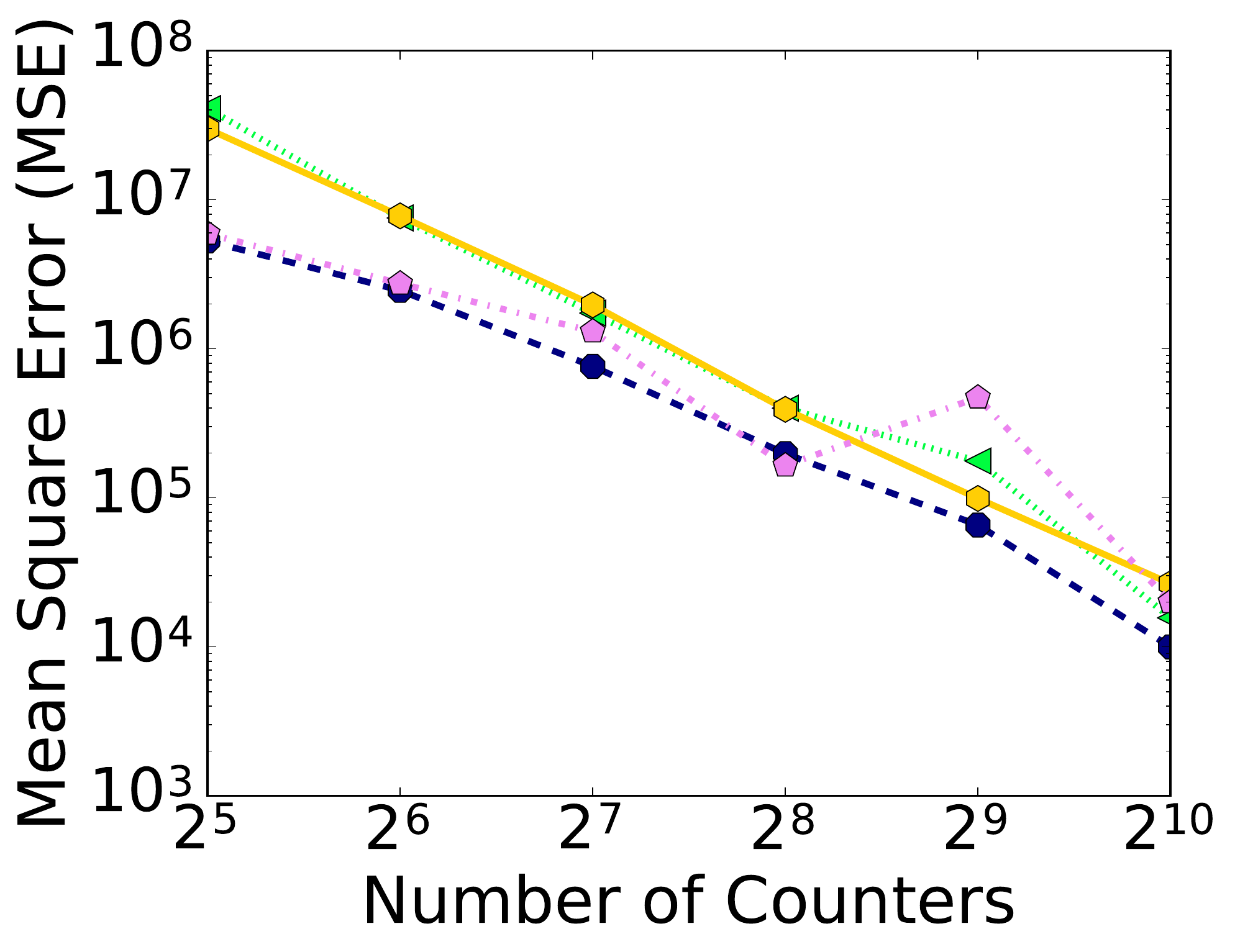}} &    			
		\subfloat[CAIDA]{\includegraphics[width = 0.23\linewidth]
			{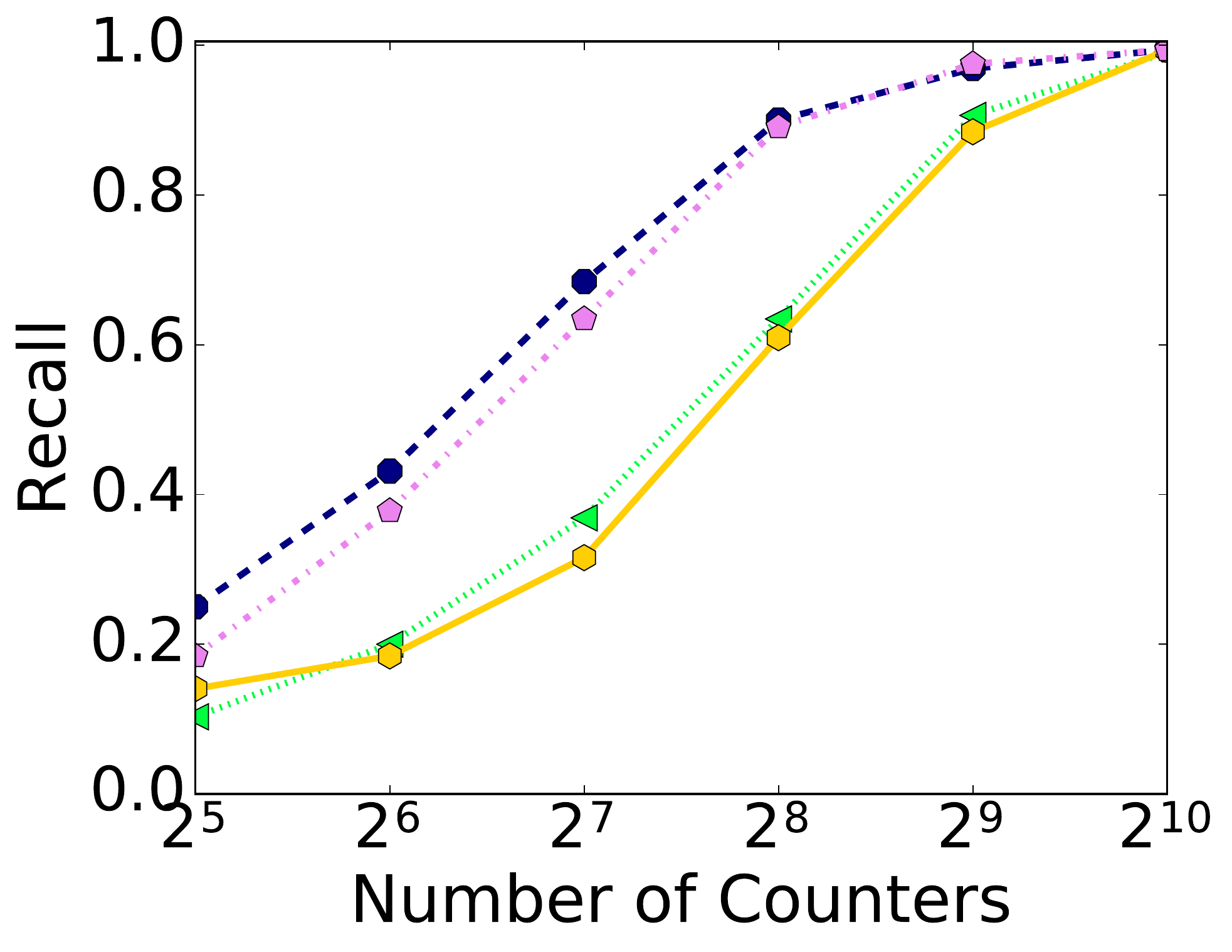}} & 
		\subfloat[UWISC-DC]{\includegraphics[width = 0.23\linewidth]
			{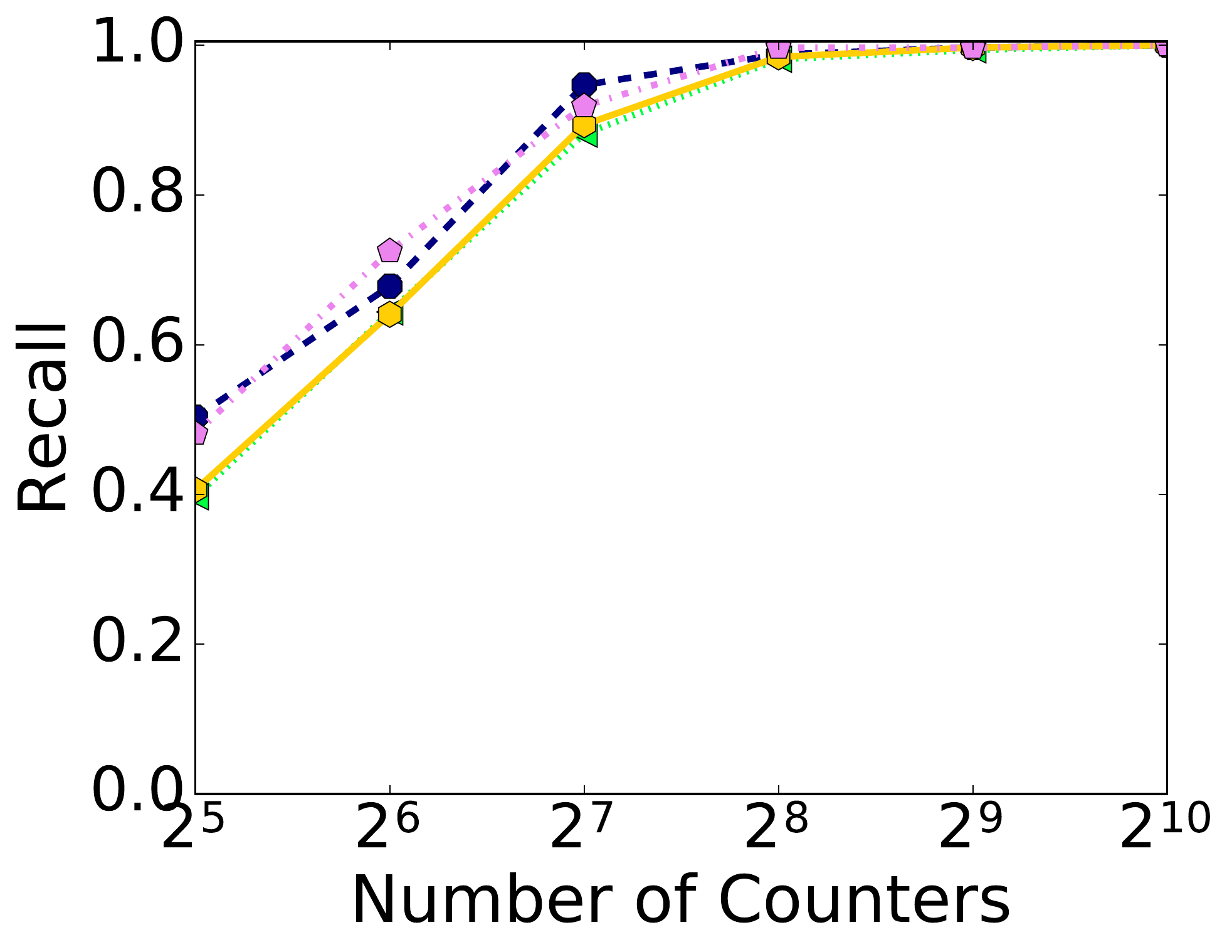}}    
	\end{tabular}    
	\centering{\includegraphics[height = 0.3cm,trim=6 8 6 8, clip]
		{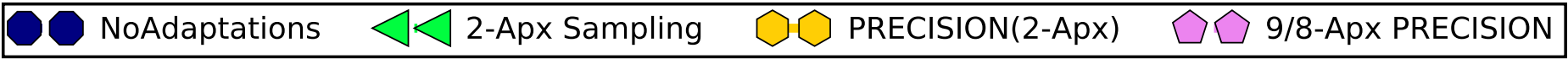}} 
		\ifdefined\submissionVersion
		\vskip -0.1cm
		\fi
		\caption{\label{fig:RestrictFE}The effect of approximating the recirculation probabilities on the accuracy for frequency estimation and top-32.
		}\vskip -0.3cm
\end{figure*}
\begin{figure*}[]
%\color{red}
%TODO: need to replace with new figure! new legend!
	\begin{tabular}{ccc}
		\subfloat[CAIDA]{\includegraphics[width = 0.66\columnwidth]
			{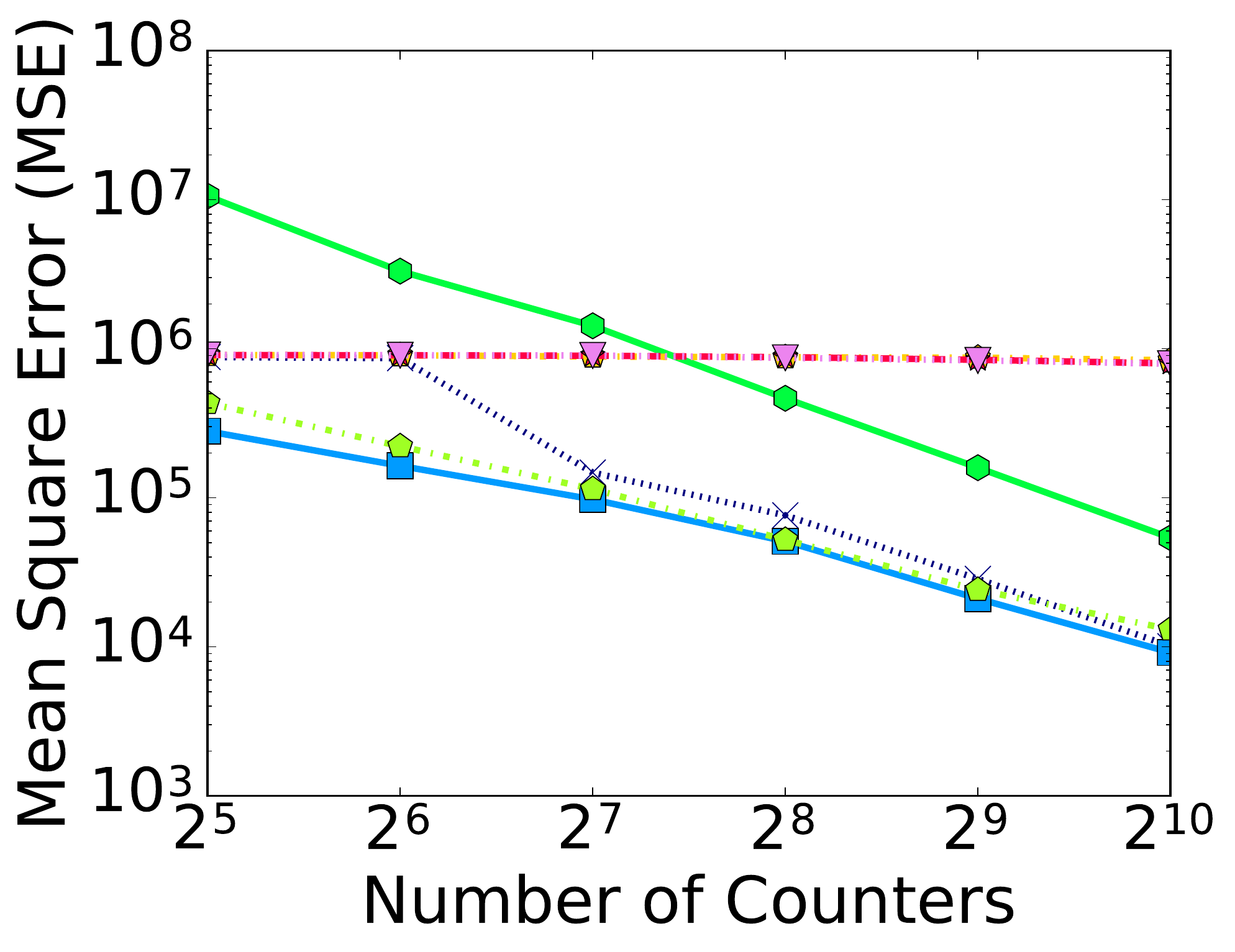}} &    
		\subfloat[UCLA]{\includegraphics[width = 0.66\columnwidth]
			{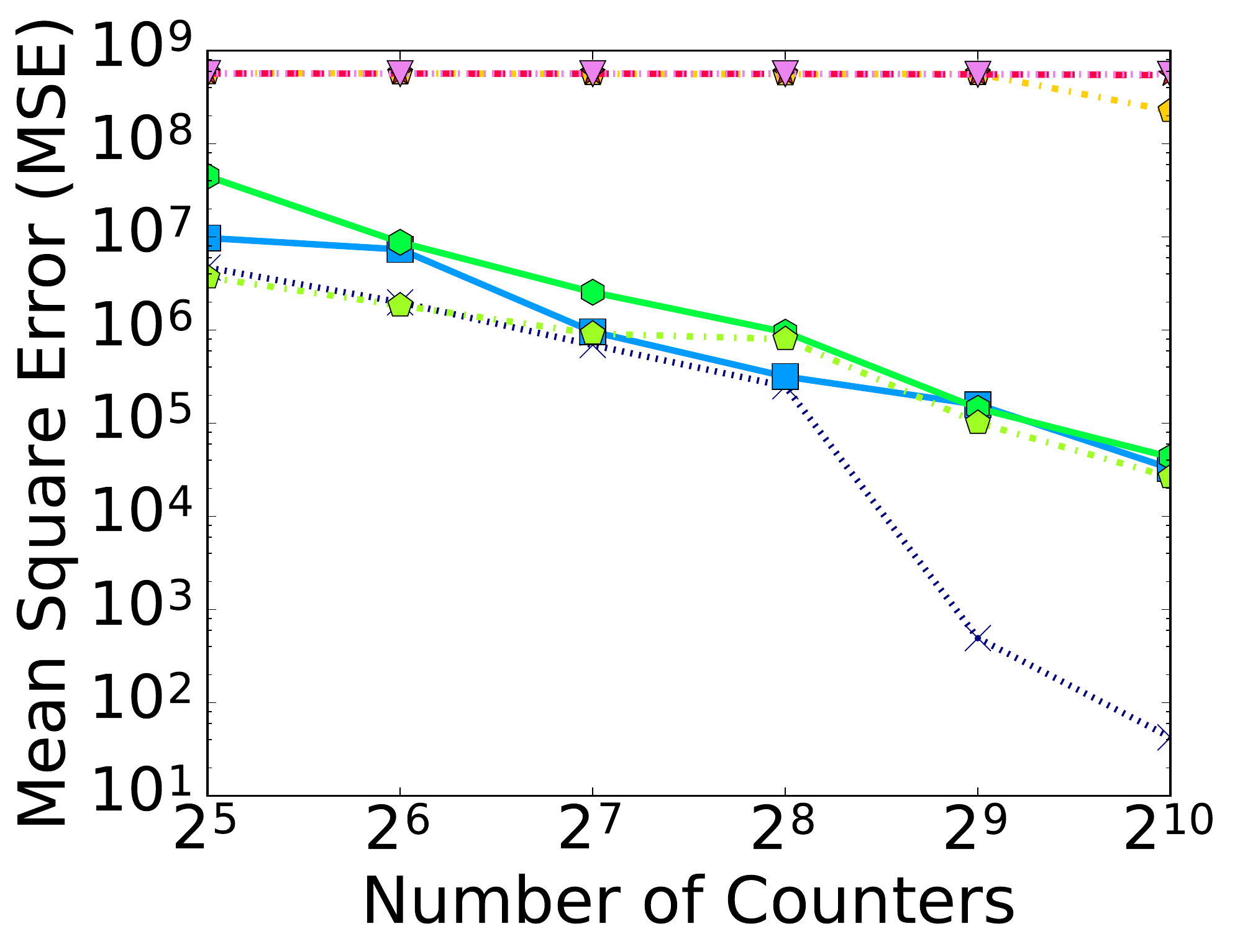}} & 
		\subfloat[UWISC-DC]{\includegraphics[width = 0.66\columnwidth]
			{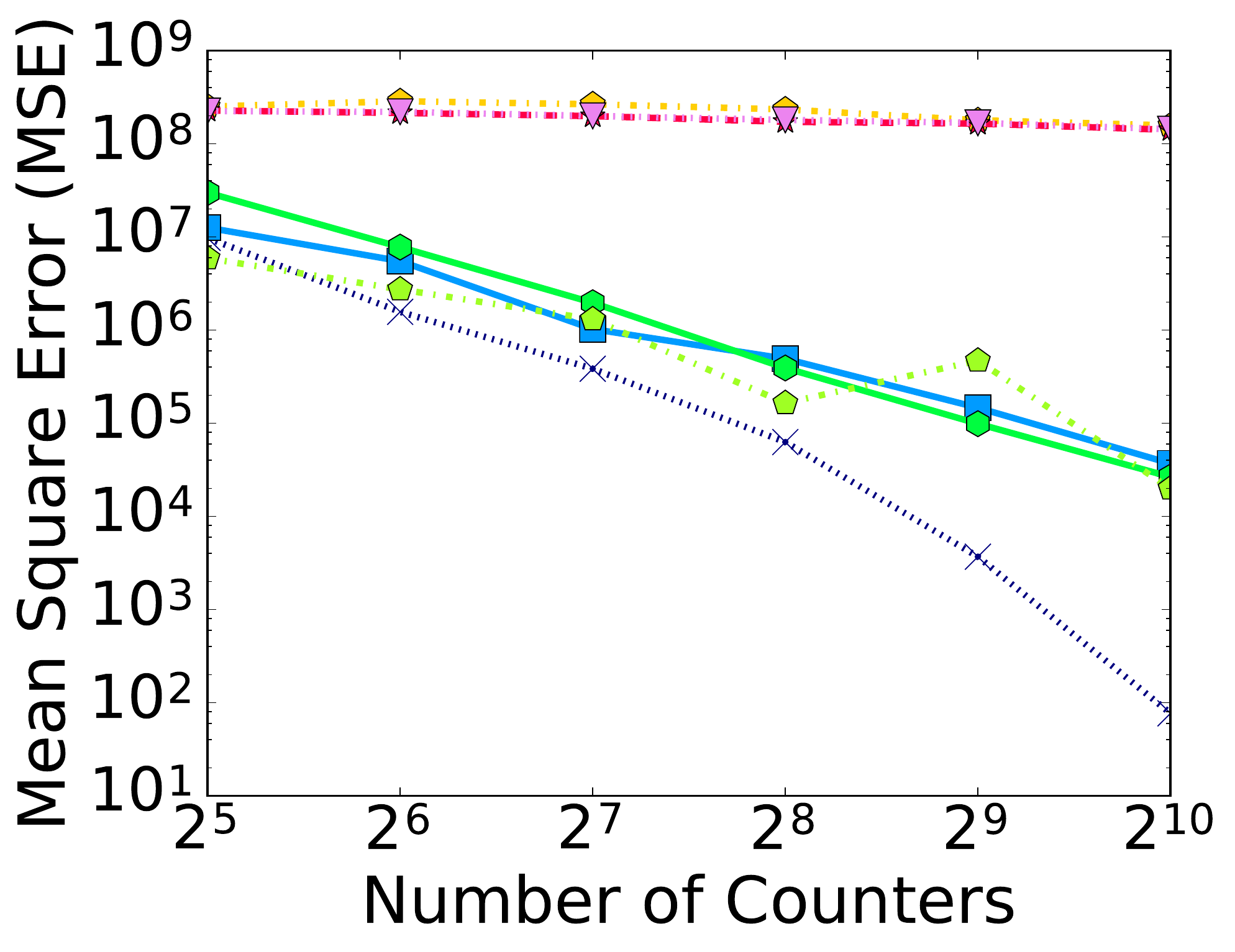}}    
	\end{tabular}    
		\ifdefined\submissionVersion
		% some compression can also be done above legend, but not too many
		\vskip -0.05 cm
		\fi
	\centering{\includegraphics[width = 0.6\linewidth]
		{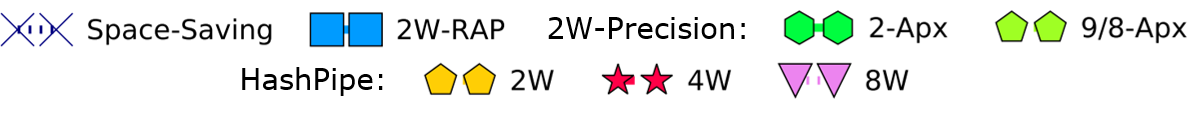}} 
		\ifdefined\submissionVersion
		% if compression needed
		\vskip -0.4 cm
		%\vskip -0.2 cm
		\fi
%	\caption{\label{fig:comparativeFE}Comparative evaluation of the frequency estimation problem. }
%\end{figure*}
%\begin{figure*}[]
	\begin{tabular}{ccc}
		\subfloat[CAIDA]{\includegraphics[width = 0.66\columnwidth]
			{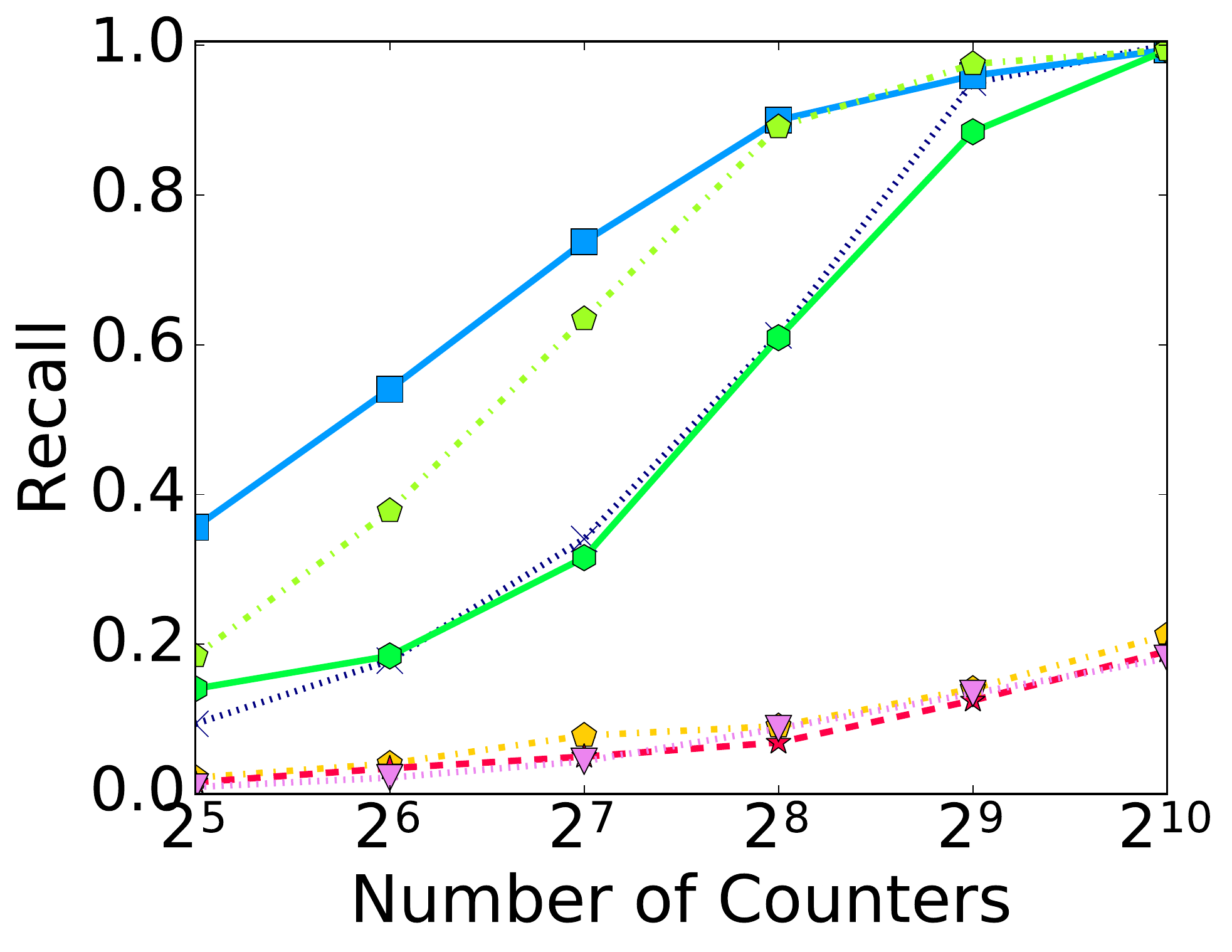}} &    
		\subfloat[UCLA]{\includegraphics[width = 0.66\columnwidth]
			{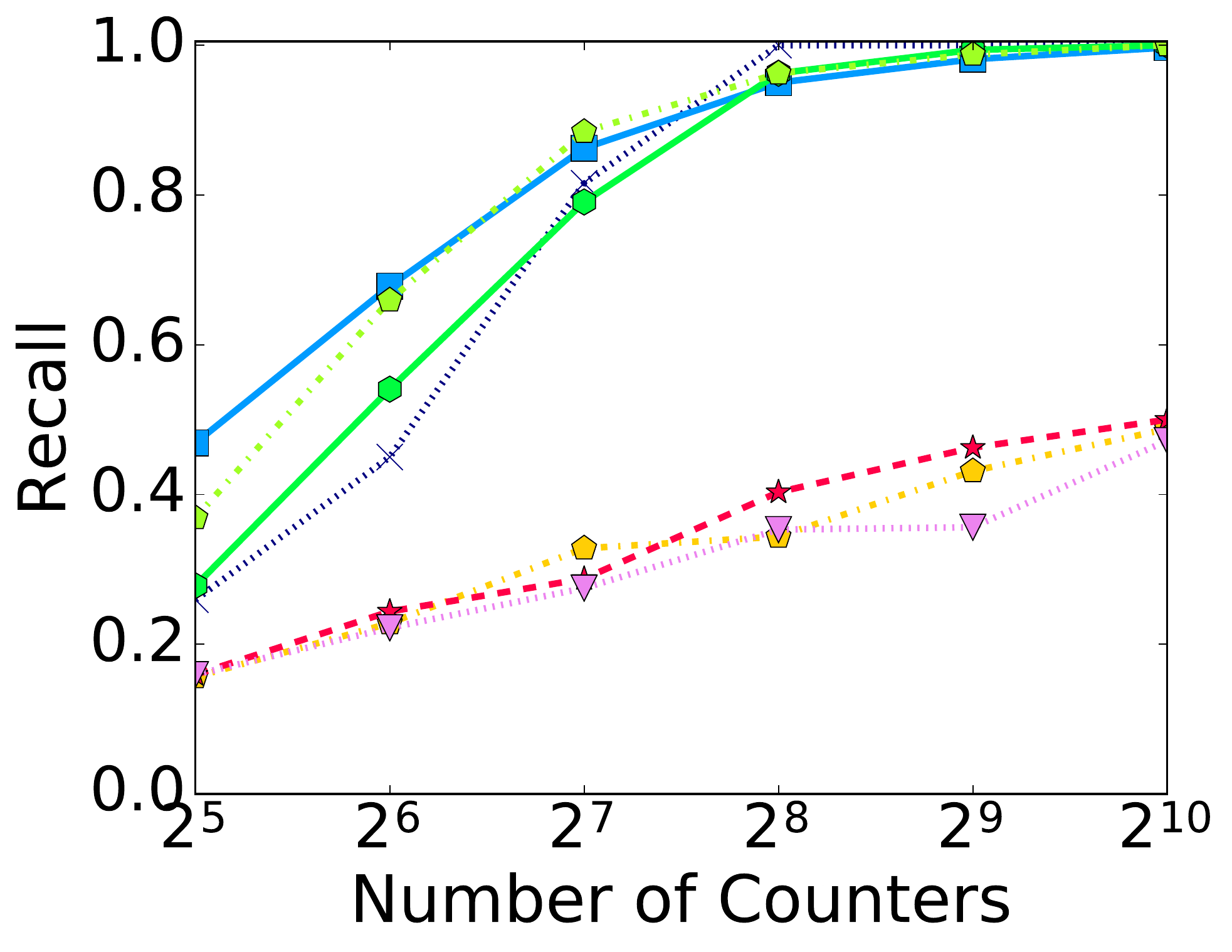}} & 
		\subfloat[UWISC-DC]{\includegraphics[width = 0.66\columnwidth]
			{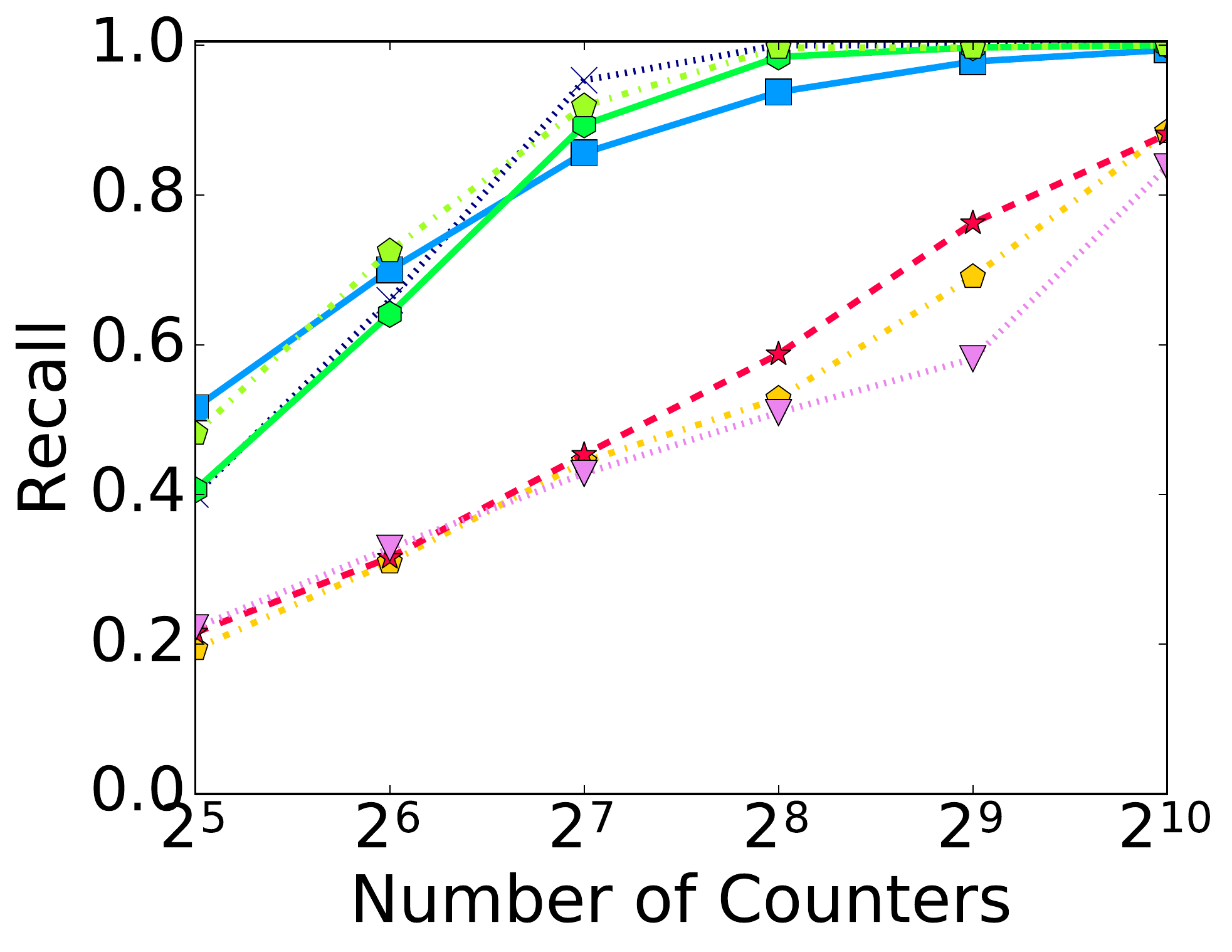}}    
	\end{tabular}    
	\vskip -0.2cm
%	\centering{\includegraphics[width = \columnwidth]
%		{}}          
	\caption{\label{fig:comparativeTopK}Comparative evaluation of the frequency estimation and top-32 problems. }\vskip -0.5cm
\end{figure*}

\subsection{Approximating the desired recirculation probability}
\label{compromize:restricted}
%\textcolor{red}{TODO: mention new algo, and its comparison. trade-off. }

We now evaluate the impact of only using random bits as random source. This limits us to approximate the ideal recirculation probability $\frac{1}{carry\_min+1}$ with a probability of the form $2^{-x}$ or $2^{-y}\times\frac{1}{\lfloor T \rfloor}$. Figure~\ref{fig:RestrictFE} shows results for frequency estimation problem (a) and (b), and for the top-k problem  (c) and (d). 
We evaluated four variants:
``NoAdaptation'' is the algorithm without any hardware-friendly adaptation beyond limited associativity; 
``2-Approximate'' is the variant added with an approximate recirculation probability of $2^{-x}$ form; 
``PRECISION (2-Approximate)'' is the standard PRECISION algorithm with all other hardware-friendly adaptations also added; and
``9/8-Approximate PRECISION'' is the PRECISION algorithm using the $2^{-y}\times\frac{1}{\lfloor T \rfloor}$ form of approximate recirculation probability.

For frequency estimation, the 2-approximation in  recirculation probability increased the error noticeably, possibly due to counters are always bumped to the next power of 2 when replacing a flow entry, causing some overestimation. Meanwhile, using the 9/8-approximation is almost as accurate as having no restriction on the recirculation probability. 
%We evaluated three alternatives; ``2W'' is the variant with just the limited associativity compromise; ``2W+RestrictedSampling'' is the 2W algorithm with restricted sampling; and ``All-Adaptations'' is PRECISION with all its RMT compromises. 
%For frequency estimation, restricted sampling increases the error noticeably, possibly due to counters are bumped to the next power of 2 when replacing flow entry, causing overestimation. To quantify the effect we look at the horizontal distance between the plotted lines and conclude that, to achieve a given accuracy, restricted sampling increases the amount of memory (number of counters) required by as much as~4$\times$ to achieve the same estimation error.

For the top-k problem, we continue with our ongoing evaluation of how many counters are needed to identify the top-32 flows. 
Notice that recirculation probabilities are less impactful in this metric and in the worst case we only need 2$\times$ as many counters as NoAdaptation to achieve the same~Recall.

It is surprising at first to notice that approximating the recirculation probability has a minimal performance impact in the UWISC-DC trace for the top-$k$ metric.
The reason is the highly-concentrated nature of this trace. 
In such workload where heavy hitters dominate, the sizes of tail flows are too small compared with the large counters maintained for heavy hitters, thus the tail flows have little chance to evict heavy hitters regardless of how we approximate probability.

\subsection{Comparison with other algorithms}
\label{eval:comparative}

Next, we evaluate PRECISION with $d=2$ and compare it with Space-Saving~\cite{SpaceSavings}, and HashPipe~\cite{HashPipe} with $d=2, 4, 8$ associativity.  Similarly, we also compare  with a $2$-way set associative RAP~\cite{RAP}.  %and show that it it requires no more than 4$x$ the space for the same accuracy in
Note that RAP was originally designed with a less restrictive programming model, and PRECISION adapts it to the RMT architecture.

Figure~\ref{fig:comparativeTopK} shows results for the frequency estimation and top-$k$ problems. 
Figures~\ref{fig:comparativeTopK}(a)-(c) shows that, for the frequency estimation problem, 2-way RAP and Space-Saving are the most accurate algorithm. They are followed by PRECISION, which is orders of magnitude more accurate than both versions of HashPipe. 

PRECISION requires at most a factor of 4$\times$ increase in memory to match the accuracy of RAP. The performance gap between them is mostly due to approximating the desired recirculation probabilities (recall that PRECISION uses by default the 2-approximation approach). 
Additionally, PRECISION is over 1000x more accurate than different versions of HashPipe, whose performance stagnates as more memory space is provided. Thus, we conclude the frequency estimation evaluation by saying that PRECISION is a dramatic improvement over HashPipe and is not much worse than the state-of-the-art algorithms despite its restricted programming model. 

Figures~\ref{fig:comparativeTopK}(d)-(f) show the Recall performance for the top-$k$ problem. 
In our top-$32$ setup, we see similar trends in all the traces, 
in which the best Recall is achieved by the $2$-way RAP algorithm followed by PRECISION and Space-Saving. The algorithm with the lowest Recall is HashPipe. We see that the $9/8$-approximate probability variant of PRECISION is on par with Space-Saving and not far behind $2$-way RAP.
%, but this time the Recall will grow slowly with more memory.  
PRECISION yields similar performance in all traces and requires at most 2$\times$ more space than RAP or Space-Saving. Compared to HashPipe it requires up to 32$\times$ \mbox{less space for the same Recall.}

\section{Conclusions}
\label{sec:conclusions}
This paper outlined the programming capabilities of the recently developed RMT high-performance programmable switch architecture.  
We used the heavy hitter detection problem as an example and exposed the capabilities and restrictions relevant for designing efficient counter-based algorithms within this architecture. 
The need for our study is emphasized as we showed that the previously suggested HashPipe algorithm,
tailored for the P4/PISA pipeline programming model, does not satisfy all the restrictions and may thus be challenging to implement on high-performance programmable switches.

By understanding these restrictions, we introduced PRECISION, a heavy hitter detection algorithm for high-throughput programmable network switches. 
We successfully compiled PRECISION to the newly released Barefoot Tofino switch which is capable of Tbps-scale aggregated throughput. 
PRECISION probabilistically recirculates a small fraction of the packets for a second pipeline traversal. We bounded the amount of recirculation to a small (e.g., 1\%) portion of the packets to achieve a negligible impact on throughput along with competitive accuracy compared to previous algorithms. 

We studied the impact of each RMT architectural restriction on the empirical accuracy. While most restrictions had a minimal effect on the accuracy, the lack of access to an unrestricted random integer generator had a noticeable impact on accuracy. Therefore, we also implemented better approximation for random recirculation probability within the RMT restrictions. We showed that this capability mitigates most of the accuracy loss at the expense of extra hardware resources. 

We performed an extensive evaluation using real and synthetic packet traces.  We showed that PRECISION outperforms recently suggested alternatives for programmable switches~\cite{HashPipe}. Specifically, it is up to 1000$\times$ more accurate when estimating the per-flow frequency and reduces the required space to identify the top 128 flows by a factor of up to 32 compared to HashPipe. We also positioned PRECISION compared to the state-of-the-art software algorithms and showed that it has a similar accuracy compared to the popular Space-Saving algorithm. Moreover, we showed that PRECISION requires at most 4$\times$ more space than RAP when using a naive 2-approximation for recirculation probability, and at most 2$\times$ more space with an improved $9/8$-approximation.

To the best of our knowledge, PRECISION is the first heavy hitter detection algorithm tailored for the RMT architecture. Overall, this work is an important step forward for measurements on high-performance programmable switches as we can now perform heavy-hitter measurements at Tbps-scale aggregated throughput and still benefit from competitive accuracy compared to the state-of-the-art algorithms. 
Further, we hope that our detailed case study of adapting a measurement algorithm to the RMT architecture would be useful for implementing various algorithms on such switches.

\section{Acknowledgments}
This work is supported by NSF grant CCF-1535948. Ran Ben Basat was supported by the Technion HPI research school, Zuckerman Foundation, the Technion Hiroshi Fujiwara cyber security research center and the Israel Cyber Directorate.
% Danny: NSF grant CCF-1535948 from Jen
% (Add your own!)

We would like to thank our shepherd Patrick~P.~C.~Lee and the anonymous reviewers of ICNP'18 for their helpful feedback.  We also thank Changhoon Kim, Masoud Moshref, Rob Harrison, and Jennifer Rexford for their constructive comments during the writing of this paper.
% Add your own!

%\appendix
\ifdefined\submissionVersion
\else
\appendices
\newcommand{\brackets}[1]{\left[#1\right]}
\section{Proof for Bounds on the Amount of Recirculation}

%In this section, we provide bounds on the expected number of recirculated packets. To that end, we start with a few auxiliary lemmas.
%\begin{lemma}\label{lem:iidGeometric}
%Fix some $p\in(0,1]$, $T\in\mathbb N^+$ 
%and let $X_1,X_2,\ldots\sim Geo(p)$ be independent geometric random variables with mean $1/p$. Denote by $Z\triangleq \min\{{n\in\mathbb N}\mid \sum\limits_{i=1}^nX_i\ge T\}$ the minimal number $n$ such that the sum of $X_1,\ldots,X_n$ exceeds the threshold $T$. Then $\mathbb E[Z] = p(T-1)+1$.
%\end{lemma}

%\begin{proof}
\textbf{Proof for lemma~\ref{lem:iidGeometric}}.
For $n\in\set{1,\ldots,T}$, let $S_n\triangleq \sum_{i=1}^n X_i$ denote the sum of the first $i$ random variables. Next, let $Y\triangleq|\set{1,\ldots,T}\setminus \set{S_n\mid n\in\set{1,\ldots,T-1}}|$ denote the number of integers between $1$ and $T$ that are \emph{not} a sum of prefix of the $X_i$'s. Observe that since the variables $X_i$ are i.i.d., geometric variables, we have that $Y\sim Bin(T-1,p)$; that is, $Y$ is a binomial random variable with mean $p(T-1)$. But observe that the value of $Z$ is simply one plus the number of $n$ values for which $S_n< T$. This establishes that $\mathbb E[Z] = 1+\mathbb E[Y] = p(T-1)+1$.
%\end{proof}

%Next, we show a bound on the expected number of packets that would be sampled by a single-counter PRECISION instance. For this, we denote by $X_i$ the number of packets between the time that the counter has reached a value of $i$ and the time it first reached a value of $i+1$.
%\begin{lemma}\label{lem:singleCounter}
%Fix some $T\in\mathbb N^+$ and let $\set{X_i\sim Geo(1/i)\mid i\in\mathbb N}$ denote independent geometric variables such that the expectation of $X_i$ is $i$. Similarly to the above lemma, let $A\triangleq \min\{{n\in\mathbb N}\mid \sum\limits_{i=1}^n X_i\ge T\}$ denote the number of variables needed to cross the threshold $T$. Then $\mathbb E[A] = O(\sqrt{T})$
%\end{lemma}
%\begin{proof}
\textbf{Proof for lemma~\ref{lem:singleCounter}}.
Intuitively, since $\mathbb E[X_i]=i$ and $\sum_{i=1}^n i = O(n^2)$, we can expect that $\Omega(\sqrt T)$ variables are needed to cross the $T$ threshold.
To prove this, first notice that we are looking for an asymptotic bound rather than computing the expectation exactly. 
This allows us to ``ignore'' the first $\sqrt T-1$ random variables. Formally:
\begin{align}
    \mathbb E[A] \le \sqrt T-1 + \mathbb E\brackets{\min\set{{n\in\mathbb N}\mid \sum\limits_{i=\sqrt T}^{\sqrt T+n} X_i\ge T}}.\label{eq1}
\end{align}
Next, let $X_{\sqrt T}',\ldots,X_T'\sim Geo(T^{-1/2})$ denote a set of i.i.d. geometric variables (independent of $X_1,\ldots,X_{\sqrt T}$) with expectation of $\sqrt T$. Notice that for $i\ge \sqrt T$, we have that the parameter for $X_i$ is smaller than that of $X_i'$. Together with~\eqref{eq1}, this allows us to further write:
\begin{align}\label{eq2}
    \mathbb E[A] \le \sqrt T-1 + \mathbb E\brackets{\min\set{{n\in\mathbb N}\mid \sum\limits_{i=\sqrt T}^{\sqrt T+n} X_i'\ge T}}.
\end{align}
Finally, we use Lemma~\ref{lem:iidGeometric} to write $\mathbb E\brackets{\min\set{{n\in\mathbb N}\mid \sum\limits_{i=\sqrt T}^{\sqrt T+n} X_i'\ge T}} = \sqrt T + 1$.
Together with~\eqref{eq2} we conclude that $\mathbb E[A]\le 2\sqrt T$.
%\end{proof}

\begin{comment}
We proceed with a theorem that bounds the overall number of recirculation in PRECISION. This implies that the number of recirculated packets is sublinear in the number of packets. This means that recirculation does not affect the throughput significantly in long measurements. Combined with our approach for setting initial values for counters we get that the throughput is not jeopardized at any point.
\begin{theorem}
Denote the number of packets in the stream by $N$ and the number of counters by $C$. The expected number of recirculated packets is $O(\sqrt {NC})$. 
\end{theorem}
\begin{proof}
For $i\in\set{1,\ldots,C}$, let $R_i$ denote the number of times PRECISION recirculates a packet to update the $i$'th counter and by $R$ the overall recirculation. Next, let $N_i$ denote the number of times this counter has been the minimal among all counters traversed by a packet; clearly, $\sum_{i=1}^C N_i\le N$ (this is an inequality as some packets update their flow counter and are surely not recirculated). According to Lemma~\ref{lem:singleCounter} we have that $\mathbb E[R_i]\le 2\sqrt{N_i}+1$. 
This gives
\begin{multline*}
    \mathbb E[R] = \sum_{i=1}^C \mathbb E[R_i] \le \sum_{i=1}^C 2\sqrt{N_i}\le 2\sqrt{\sum_{i=1}^C}N_i= 2\sqrt{NC},
\end{multline*}
where the last inequality follows from the concaveness of the square root function.
\end{proof}
\end{comment}
\fi

%\newpage
{ \bibliographystyle{acm}
\bibliography{refs}}
\balance

%\theendnotes

\end{document}